\numberwithin{equation}{section}
\DeclareMathAlphabet{\mathpzc}{OT1}{pzc}{m}{it}
\newtheorem{theorem}{Theorem}[section]
\newtheorem{corollary}[theorem]{Corollary}
\newtheorem{lemma}[theorem]{Lemma}
\newtheorem{proposition}[theorem]{Proposition}
\theoremstyle{definition}
\newtheorem{definition}[theorem]{Definition}
\theoremstyle{remark}
\newtheorem{remark}[theorem]{Remark}
\newtheorem*{corollary-non}{Corollary}
\def\bC{\mathbb{C}}
\def\bN{\mathbb{N}}
\def\bR{\mathbb{R}}
\def\cB{\mathcal{B}}
\def\cD{\mathcal{D}}
\def\cH{\mathcal{H}}
\def\cI{\mathcal{I}}
\def\cK{\mathcal{K}}
\def\cL{\mathcal{L}}
\def\cP{\mathcal{P}}
\def\cQ{\mathcal{Q}}
\def\cR{\mathcal{R}}
\DeclareMathOperator*{\argmax}{argmax}
\title{Saturating the Data Processing Inequality for $\alpha-z$ R{\'e}nyi Relative Entropy}
\author{Sarah Chehade}
\date{}
\begin{document}

\maketitle

\begin{abstract}
    
    It has been shown that the $\alpha-z$ R{\'e}nyi relative entropy satisfies the Data Processing Inequality (DPI) for a certain range of $\alpha$'s and $z$'s. Moreover, the range is completely characterized by Zhang in `20.  We prove necessary and algebraically sufficient conditions to saturate the DPI for the $\alpha-z$ R{\'e}nyi relative entropy whenever $1<\alpha\leq 2$ and $\frac{\alpha}{2}\leq z\leq\alpha$. Moreover, these conditions coincide whenever $\alpha=z$.  
     
\end{abstract}

\section{Introduction}\label{sec: intro}

Statistical distinguishability between two states is a central concept in Quantum Information Theory. One basic distinguishability measure, the quantum relative entropy, was introduced by Umegaki in 1962 in his paper \cite{umegakiEntropy} about non-commutative conditional expectations. When two states pass through a noisy quantum channel, it is indeed more challenging to measure this distinguishability. This phenomena is described as the data processing inequality. In order for a distinguishability measure to have any operational meaning, it must satisfy the data processing inequality. Petz \cite{petz, petz1} proved this inequality in the context of von Neumann Algebras. More information about this relative entropy is found in section \ref{relativeentrop}. Since then, once generalizations of the quantum relative entropies were defined, the question of whether the data processing inequality holds or not (and for which parameters) generated several publications such as \cite{petz, petz1, summaryPaper, AdAlphaZrenyi, vershynina, vershynina1, Beigi13, sandwichedRecovery, sandwichConditions, jencova1,Jenvcova2,umegakiEntropy,CFLconjecture, hiaiconcavity, carlen2016some, alphaRecovery}. One quantum generalization of the quantum relative entropy is called the $\alpha - $R{\'e}nyi relative entropy, and it was proven by Petz in \cite{petz, petz1} and also by B{\'e}ny, Mosonyi, et.al in \cite{alphaRecovery}, that this entropy measure indeed satisfies the data processing inequality. More information about this relative entropy is found in section \ref{alphaRRE}. Later, a different generalization was introduced by M{\"u}ller-Lennert, Dupuis, Szehr, Fehr, and Tomamichel in \cite{sandwichDefinition}, and independently by Wilde, Winter, and Yang in \cite{wildeSandwiched}. This family is called the $\alpha -$Sandwiched R{\'e}nyi relative entropy. Under certain parameters of $\alpha$, this relative entropy also satisfies the data processing inequality. Lastly, Audenaert and Datta, in their paper \cite{AdAlphaZrenyi}, introduced a two parameter family of relative entropies that generalizes all entropy functions stated thus far. This family of entropies, the $\alpha-z$ R{\'e}nyi relative entropy, was completely characterized in terms of its two parameters, $\alpha$ and $z$, as to when it satisfies the data processing inequality and when it does not. Reference \cite{summaryPaper} gives a nice intuitive summary of the contributions to the data processing inequality, and \cite{CFLconjecture} finishes it up with the final characterizations of the parameters. 

It is known that Lindblad and Uhlmann  in \cite{lindblad1974expectations} and \cite{uhlmann1973} respectively proved that satisfying the data processing inequality is equivalent to proving convexity or concavity of certain trace functionals within the definitions of the relative entropy functions. This is a crucial ingredient in most of results on data processing. In fact, working with these trace functionals is just as important when answering the questions of whether it is possible to saturate these inequalities.

Our interest lies in the question of saturating the DPI. i.e., when is the relative entropy preserved when states pass through a noisy quantum channel. For some of the relative entropies above, the answer to this question is in terms of recoverability of states. Recoverability exists for the quantum relative entropy, the R{\'e}nyi relative entropy, and the $\alpha-$Sandwiched R{\'e}nyi relative entropy. This work contributes to the question of recoverability in terms of the $\alpha-z$ R{\'e}nyi relative entropy. 

After this paper first appeared on the arxiv, another paper \cite{zhang2020equality} appeared asking the same question: when is the DPI saturated for the $\alpha-z$ R{\'e}nyi relative entropy? In \cite{zhang2020equality}, the authors answer this question in a different format than it is presented here. More work is needed to compare the results.

The main result of this paper says:

\begin{corollary-non}[\ref{partial case1}]

Let $\rho\in\cD(\cH)$, $\sigma\in\mathcal{Q}(\mathcal{H})$, and $\Lambda:\cB(\cH)\to\cB(\cK)$ be a quantum channel. For any $1<\alpha\leq 2$ and $\frac{\alpha}{2}\leq z\leq\alpha$,  whenever saturation of the DPI holds, i.e., $D_{\alpha,z}(\rho||\sigma)=D_{\alpha,z}(\Lambda(\rho)||\Lambda(\sigma))$,   then the states satisfy  
\begin{equation}
\sigma^{\frac{1-z}{2z}}(\sigma^{\frac{1-\alpha}{2z}}\rho^{\frac{\alpha}{z}}\sigma^{\frac{1-\alpha}{2z}})^{z-1}\sigma^{\frac{1-z}{2z}}=\Lambda^*\left(\Lambda(\sigma)^{\frac{1-z}{2z}}(\Lambda(\sigma)^{\frac{1-\alpha}{2z}}\rho^{\frac{\alpha}{z}}\Lambda(\sigma)^{\frac{1-\alpha}{2z}})^{z-1}\Lambda(\sigma)^{\frac{1-z}{2z}}\right).
\end{equation}
\end{corollary-non}

The paper is arranged as follows: section $2$, discusses the definitions and notations used throughout this paper. In addition to this, some known results and properties about the different quantum relative entropies of interest are mentioned. In section $3$, the technical results using tools in complex analysis or results about convex/concave trace functionals are discussed. Section $4$ is dedicated to the main result in the context of partial traces followed by the more general consequences. Finally, section $5$ concludes with a brief discussion on closing remarks.

\section{Notations and Definitions}\label{sec: notations}

Throughout this paper, only finite-dimensional Hilbert spaces are considered. When $\cH_{AB}$ is written, it is understood to mean a tensor product of Hilbert spaces $\cH_A$ and $\cH_B$. For a Hilbert space $\cH$, let $\cB(\cH)$ denote the set of bounded linear operators on $\cH$. The set of all positive operators is denoted by 
$$\cP(\cH):=\{A\in\cB(\cH) : A> 0\},$$
and the space of all \emph{density operators} is defined to be 
$$\cD(\cH):=\{\rho\in\cP(\cH) : \mathrm{Tr}(\rho)=1\}.$$
Recall that if an operator $\rho$ is positive $(\rho>0$), then it is automatically self adjoint, i.e., $\rho^*=\rho$. Given any linear operator $\cL:\cB(\cH)\to \cB(\cK)$, where $\cH$ and $\cK$ are Hilbert spaces, the adjoint operator $\cL^*:\cB(\cK)\to \cB(\cH)$ is the unique operator satisfying 
$$\langle\cL(X),Y\rangle_{\cB(\cK)}=\langle X,\cL^*(Y)\rangle_{\cB(\cH)},$$
and the inner product here is the Hilbert-Schmidt inner product defined as 
$$\langle X,Y\rangle_{HS}=\mathrm{Tr}(X^*Y).$$
In general, the Schatten $p-$ norm is defined as 
$$\|X\|_p:=\left(\mathrm{Tr}\left[(X^*X)^{\frac{p}{2}}\right]\right)^{\frac{1}{p}},$$
for $p\in[1,\infty)$. Note that this norm satisfies the H{\"o}lder inequality, sub-multiplicativity, and monotonicity in $p$. A linear operator $\cL:\cB(\cH)\to \cB(\cK)$ is said to be \emph{$n-$positive} if
$$\mathbbm{1}_n\otimes\cL:\cB(\bC^n)\otimes\cB(\cH)\to \cB(\bC^n)\otimes\cB(\cK)$$
is a positive operator, where $\mathbbm{1}_n$ is the identity operator on $\cB(\bC^n)$. 

\begin{definition}
If $\mathbbm{1}_n\otimes\cL$ is positive for all $n\in\bN$, then $\cL$ is called a \emph{completely positive map}. 
\end{definition}

\begin{definition}
Completely positive maps that also preserve the trace of operators are called \emph{quantum channels}. i.e., $\mathrm{Tr}(\cL(\rho))$=$\mathrm{Tr}(\rho)$. 
\end{definition}

\subsection{Quantum Relative Entropies}
\subsubsection{Umegaki Relative Entropy}\label{relativeentrop}
In quantum information theory, the information shared between states is regularly studied through the understanding of quantum entropies. Umegaki relative entropy, also know as the quantum relative entropy, in \cite{umegakiEntropy} is defined as $$D(\rho||\sigma):=\mathrm{Tr}(\rho\log\rho-\rho\log\sigma),$$
where $\rho\in\cD(\cH)$ and $\sigma\in\cD(\cH)$, provided that $\mathrm{supp}(\rho)\subseteq\mathrm{supp}(\sigma)$. Otherwise, the relative entropy between $\rho$ and $\sigma$ is said to be $\infty$. Reference \cite{summaryPaper} provides an extensive review of the formulation of this relative entropy, applications, and some of its properties. In this setting, \cite{umegakiEntropy} introduces and explains the DPI. That is, for any quantum channel $\Lambda$, the following inequality holds
    $$D(\rho||\sigma)\geq D(\Lambda(\rho)||\Lambda(\sigma)).$$
This inequality is interpreted as an increased difficulty in distinguishing states from one another after the states pass through a noisy quantum channel. 

\begin{definition}
    If there is a quantum channel $\Psi$ such that $\Psi$ recovers states $\rho$ and $\sigma$, i.e., 
    \begin{equation}(\Psi\circ\Lambda)\rho=\rho \mbox{ and } (\Psi\circ\Lambda)\sigma=\sigma,
\end{equation}
then we say that $\Lambda$ is \emph{sufficient} for states $\rho$ and $\sigma$. When this happens, the quantum channel $\Psi$ is called a \emph{recovery map}. 
\end{definition}

\noindent Saturation of the DPI was originally proven by Petz in the context of von Neumann algebras in \cite{petz, petz1}. The result states that 2 states $\rho$ and $\sigma$ saturate the DPI for a quantum channel $\Lambda$ if and only if the quantum channel is sufficient for these states. The recovery map $\Psi_{\sigma,\Lambda}$, known as the Petz recovery map has an explicit form of
    $$\Psi_{\sigma,\Lambda}(\cdot)=\sigma^{\frac{1}{2}}\Lambda^*\left(\Lambda(\sigma)^{-\frac{1}{2}}\cdot\Lambda(\sigma)^{-\frac{1}{2}}\right)\sigma^{\frac{1}{2}}.$$
The map $\Psi$ is indexed by $\sigma$ and $\Lambda$ to indicate that the recovery map depends on this state and this quantum channel. In a different context, you can saturate the DPI with the use of an error term, which was done in 
     
     \begin{theorem}\cite[Corollary 1.7]{vershynina1}
        For $\rho$ and $\sigma$ density operators and $\mathcal{N}$ a partial trace, the following inequality holds

        $$D(\rho||\sigma)-D(\mathcal{N}(\rho)||\mathcal{N}(\sigma))\geq (\frac{\pi}{8})^4\|\rho^{-1}\|^{-2}\|\mathcal{R}_\rho(\mathcal{N}(\sigma))-\sigma\|_1^4, $$ where $\|\cdot\|_1$ is the Schatten 1-norm, $\|X\|_1:=\mathrm{Tr}|X|=\mathrm{Tr}(X^\ast X)^{\frac{1}{2}}$, for an operator $X$ and $\mathcal{R}_\rho$ is a Petz recovery map.  
    \end{theorem}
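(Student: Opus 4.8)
The plan is to prove the bound in two logically independent stages: first, lower bound the data-processing gap $D(\rho\|\sigma)-D(\mathcal N(\rho)\|\mathcal N(\sigma))$ by a quantity quadratic in the fidelity deficit $1-F\big(\sigma,\mathcal R_\rho(\mathcal N(\sigma))\big)$ between $\sigma$ and its Petz reconstruction, with a constant carrying a power of $\pi$ and the factor $\|\rho^{-1}\|^{-2}$; second, convert this fidelity deficit into the trace-norm distance $\|\mathcal R_\rho(\mathcal N(\sigma))-\sigma\|_1$ via the Fuchs--van de Graaf inequalities. Since the reconstruction $\mathcal R_\rho(\mathcal N(\sigma))$ is the image of the density operator $\mathcal N(\sigma)$ under the completely positive, trace-preserving Petz map $\mathcal R_\rho$, it is again a density operator, so both arguments of the fidelity are states and the standard estimates apply. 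I would exploit throughout that $\mathcal N$ is a partial trace, so that $\mathcal N^*(Y)=Y\otimes I$ and the Petz map together with all resolvent expressions below become completely explicit.

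For the first stage I would use an integral representation of the relative entropy. Writing the gap through the formula $\log X=\int_0^\infty\big[(1+t)^{-1}-(X+t)^{-1}\big]\,dt$ (or the equivalent one-parameter interpolation) expresses $D(\rho\|\sigma)-D(\mathcal N(\rho)\|\mathcal N(\sigma))$ as an integral over $t$ of a trace functional whose integrand is, slice by slice, monotone under the channel by operator convexity of $x\mapsto-\log x$. The defect in each slice is then bounded below by a squared Hilbert--Schmidt distance between the resolvent data of $\sigma$ and of its pullback $\mathcal N^*\mathcal N(\sigma)$, via a double-operator-integral (Fréchet derivative of $\log$) estimate playing the role of a non-commutative Pinsker inequality. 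Integrating the resulting kernel over $t\in(0,\infty)$ produces the numerical constant, the factor of $\pi$ arising from an elementary arctangent-type integral; stripping the $\rho$-weighting that the representation attaches to the Hilbert--Schmidt norm down to the unweighted norm is what introduces a scalar $\|\rho^{-1}\|^{-1}$ for each weight removed. Assembling these estimates yields a bound of the schematic form $D(\rho\|\sigma)-D(\mathcal N(\rho)\|\mathcal N(\sigma))\ \ge\ c\,\|\rho^{-1}\|^{-2}\,\big(1-F(\sigma,\mathcal R_\rho(\mathcal N(\sigma)))\big)^2$, the remainder being quadratic in the fidelity deficit because the gap sits at second order in the perturbation $\mathcal R_\rho(\mathcal N(\sigma))-\sigma$.

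For the second stage I would invoke Fuchs--van de Graaf in the form $1-F(\tau,\omega)\ge\tfrac18\|\tau-\omega\|_1^2$, valid for states $\tau,\omega$. Applied to $\tau=\sigma$ and $\omega=\mathcal R_\rho(\mathcal N(\sigma))$ and substituted into the stage-one bound, the square of the fidelity deficit becomes the fourth power $\tfrac1{64}\|\mathcal R_\rho(\mathcal N(\sigma))-\sigma\|_1^4$, and collecting the constant $c$, the factor $\tfrac1{64}$, and the powers of $\pi$ produces the stated coefficient $(\pi/8)^4\,\|\rho^{-1}\|^{-2}$. The doubling of the quadratic — a squared fidelity deficit composed with a quadratic Fuchs--van de Graaf estimate — is precisely what accounts for the trace norm appearing to the fourth power rather than the second.

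The hard part will be the first stage, specifically the sharp pointwise operator estimate with a dimension-free constant. One must bound the per-slice defect of the integral representation below by a weighted Hilbert--Schmidt deviation without losing a dimensional factor, control the double operator integral uniformly in $t$, and carry out the $t$-integration so that the arctangent integral delivers exactly the factor of $\pi$. Equally delicate is the bookkeeping that converts the naturally $\rho$-weighted Hilbert--Schmidt quantity coming out of the representation into the unweighted Schatten-$1$ norm of $\mathcal R_\rho(\mathcal N(\sigma))-\sigma$, since it is only here that $\|\rho^{-1}\|$ enters and one must verify that it enters to exactly the second power. By contrast, the Fuchs--van de Graaf conversion in the second stage is entirely standard.
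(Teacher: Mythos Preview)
The paper does not contain a proof of this theorem at all: it is quoted verbatim as \cite[Corollary 1.7]{vershynina1} in the background section on Umegaki relative entropy, with no argument given. There is therefore no ``paper's own proof'' to compare your attempt against.

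As to your sketch itself: the two-stage architecture you describe---an integral representation of the logarithm yielding a quadratic lower bound in a $\rho$-weighted Hilbert--Schmidt deviation, followed by a Fuchs--van de Graaf conversion to trace norm---is indeed the strategy of the cited source. But what you have written is a plan, not a proof. You explicitly flag the first stage as ``the hard part'' and then describe it only schematically: the double-operator-integral bound, the uniform-in-$t$ control, the exact bookkeeping that produces $(\pi/8)^4$ rather than some other constant, and the precise mechanism by which the weighted norm becomes $\|\mathcal R_\rho(\mathcal N(\sigma))-\sigma\|_1$ with exactly $\|\rho^{-1}\|^{-2}$ are all asserted rather than carried out. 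In particular, your route through the fidelity $F(\sigma,\mathcal R_\rho(\mathcal N(\sigma)))$ and the Fuchs--van de Graaf inequality $1-F\ge\tfrac18\|\cdot\|_1^2$ is not how the constant actually arises in the original; there the fourth power and the factor $(\pi/8)^4$ come from squaring a bound of the form $(\pi/8)^2\|\rho^{-1}\|^{-1}\|\cdot\|_2^2$ and then passing from Hilbert--Schmidt to trace norm, not from composing a fidelity estimate with Fuchs--van de Graaf. So while your outline is in the right neighborhood, the specific mechanism you propose for the constant would not reproduce $(\pi/8)^4$ on the nose.
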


 \subsubsection{\texorpdfstring{$\alpha-$}RR{\'e}nyi Relative Entropy (\texorpdfstring{$\alpha-$}RRRE)}\label{alphaRRE}
 One of the first generalizations of the Umegaki relative entropy is defined for $\alpha\in(-\infty,1)\cup(1,\infty)$. The $\alpha-$R{\'e}nyi Relative Entropy is expressed as $$D_\alpha(\rho||\sigma):=\frac{1}{\alpha-1}\log \mathrm{Tr}\left(\rho^\alpha\sigma^{1-\alpha}\right),$$
 provided $\mathrm{supp}(\rho)\subseteq \mathrm{supp}(\sigma)$. 
 For $\alpha\in[0,1)\cup(1,2]$, Theorem 5.1 of \cite{alphaRecovery} proves equality of the DPI if and only if there exists a recovery map that recovers both states $\rho$ and $\sigma$ perfectly well. Furthermore, \cite{alphaRecovery} also provides the algebraic necessary and sufficient conditions for the $\alpha$-R{\'e}nyi relative entropy as well. That is for all $\alpha\in[0,1)\cup(1,2]$, saturation of the DPI is satisfied if and only if 
$$\Lambda^*(\Lambda(\sigma)^{-z}\Lambda(\rho)^{-z})= \sigma^{-z}\rho^z,$$
for all $z\in\bC$. \footnote{It is understood that when $\alpha=1$ or the limit as $\alpha$ approaches $1$, this is the Umegaki relative entropy case.} These proofs are in a more general class of quantum functionals called quantum f-divergence, which is actually a class of quantum quasi-entropies. In fact in \cite{vershynina}, similar claims are made however in a different context and with the use of an error bound. Once the bound is proven, necessity and sufficiency follow very easily. Their result is

\begin{theorem}\cite[Theorem 6.1]{vershynina}
For any $\alpha\in(0,1)$, under explicit assumptions defined in the paper for $\rho$ and $\sigma$, the following inequality holds
$$
    D_\alpha(\rho||\sigma)-D_\alpha(\Lambda(\rho)||\Lambda(\sigma))\geq
$$
$$\frac{1}{1-\alpha}\log\left(1+K\|\Lambda(\sigma)^{\frac{1}{2}}\Lambda(\rho)^{-\frac{1}{2}}\rho^{\frac{1}{2}}-\sigma^{\frac{1}{2}}\|_2^{6-2\alpha}\right),
$$
where $K$ is a constant calculated in their paper.
\end{theorem}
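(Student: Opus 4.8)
The plan is to transfer the whole statement onto the underlying trace functional $Q_\alpha(\rho,\sigma):=\mathrm{Tr}(\rho^\alpha\sigma^{1-\alpha})$, for which $D_\alpha(\rho\|\sigma)=\frac{1}{\alpha-1}\log Q_\alpha(\rho,\sigma)$. Since $\alpha\in(0,1)$ makes $\frac{1}{\alpha-1}=-\frac{1}{1-\alpha}<0$, a direct computation gives
$$D_\alpha(\rho\|\sigma)-D_\alpha(\Lambda(\rho)\|\Lambda(\sigma))=\frac{1}{1-\alpha}\log\frac{Q_\alpha(\Lambda(\rho),\Lambda(\sigma))}{Q_\alpha(\rho,\sigma)}.$$
Because $t\mapsto t^\alpha$ and $t\mapsto t^{1-\alpha}$ are operator concave on $(0,\infty)$ when $\alpha\in(0,1)$, the functional $Q_\alpha$ is jointly concave and nondecreasing under quantum channels, so the ratio is at least $1$ and the prefactor $\frac{1}{1-\alpha}$ is positive, which already recovers the DPI. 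Since $s\mapsto\frac{1}{1-\alpha}\log s$ is increasing, it then suffices to prove the quantitative monotonicity-defect bound
$$Q_\alpha(\Lambda(\rho),\Lambda(\sigma))-Q_\alpha(\rho,\sigma)\ge K\,Q_\alpha(\rho,\sigma)\,\big\|\Lambda(\sigma)^{\frac{1}{2}}\Lambda(\rho)^{-\frac{1}{2}}\rho^{\frac{1}{2}}-\sigma^{\frac{1}{2}}\big\|_2^{6-2\alpha},$$
after which dividing by $Q_\alpha(\rho,\sigma)$, adding $1$, and applying $\log$ delivers the claimed inequality and reads off the constant $K$.

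First I would reduce a general channel to a partial trace. Writing a Stinespring dilation $\Lambda(\cdot)=\mathrm{Tr}_E(V\,\cdot\,V^*)$ with $V$ an isometry, both $D_\alpha$ and $Q_\alpha$ are invariant under the isometric embedding $X\mapsto VXV^*$, so the estimate for $\Lambda$ follows from the corresponding estimate for the partial trace $\mathcal N=\mathrm{Tr}_E$. This reduction is convenient because the monotonicity defect of a partial trace is governed by the off-diagonal block structure of the dilated operators, and the Petz recovery map then takes the explicit form recorded in Section~\ref{relativeentrop}. The operator $\Lambda(\sigma)^{\frac{1}{2}}\Lambda(\rho)^{-\frac{1}{2}}\rho^{\frac{1}{2}}-\sigma^{\frac{1}{2}}$ appearing on the right is precisely the deviation measuring the failure of that recovery map to return $\sigma$ (the support/invertibility hypotheses for $\rho,\sigma$ are exactly the paper's standing assumptions that make the inverse powers well defined), which is why it is the natural quantity controlling the defect.

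The analytic heart, and the step I expect to be hardest, is a strengthened operator inequality carrying an explicit remainder. I would linearize the power via the integral representation $x^{1-\alpha}=\frac{\sin\pi\alpha}{\pi}\int_0^\infty\frac{x}{x+t}\,t^{-\alpha}\,dt$, reducing the concavity of $Q_\alpha$ to a superposition of the operator-convexity statements for the resolvents $x\mapsto\frac{x}{x+t}$. For each resolvent the Schwarz-type inequality for the completely positive map carries a remainder term, and because the ambient norm here is Hilbert–Schmidt I would exploit the exact $2$-uniform convexity of that norm (the parallelogram identity) to turn the remainder into a genuinely quadratic lower bound in $\big\|\Lambda(\sigma)^{\frac{1}{2}}\Lambda(\rho)^{-\frac{1}{2}}\rho^{\frac{1}{2}}-\sigma^{\frac{1}{2}}\big\|_2$. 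Integrating against $t^{-\alpha}\,dt$ and reinserting the state powers through Hölder's inequality then converts this quadratic estimate into the defect bound with the exponent $6-2\alpha$ (note that $6-2\alpha\to 4$ as $\alpha\to1$, matching the Umegaki exponent in Theorem~\ref{sec: notations} of the partial-trace bound). The genuine obstacle is the quantitative bookkeeping: extracting a sharp remainder with an explicit constant from the operator Jensen/Schwarz step and tracking it through both the integral and the Hölder interpolation so that the exponent lands exactly on $6-2\alpha$ and the constant $K$ stays under control.
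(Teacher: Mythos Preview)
The paper does not contain a proof of this statement: it is quoted verbatim as \cite[Theorem 6.1]{vershynina} in the background subsection on the $\alpha$-R{\'e}nyi relative entropy and is not proved anywhere in the text. There is therefore nothing in this paper to compare your proposal against.

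On its own merits, your outline is a plausible strategy---reduce to the trace functional $Q_\alpha$, pass to a partial trace via Stinespring, linearize $x^{1-\alpha}$ through the integral representation, and try to extract a quadratic Hilbert--Schmidt remainder from the operator Schwarz/Jensen inequality---but it is a plan, not a proof. The step you flag as hardest is genuinely hard: you have not explained how the quadratic remainder in $\|\cdot\|_2$ becomes an exponent of exactly $6-2\alpha$ after integrating in $t$ and interpolating with H{\"o}lder, nor where the extra factor of $4$ in the exponent (beyond the naive $2$) comes from. In the cited work that exponent arises from a specific chain of estimates combining the integral remainder with spectral bounds on $\rho$ and $\sigma$ (this is where the ``explicit assumptions'' and the constant $K$ enter), and your sketch does not yet supply that mechanism. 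Also, your cross-reference ``Theorem~\ref{sec: notations}'' points to a section label, not a theorem; presumably you meant the Umegaki bound quoted just before this statement.
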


 \subsubsection{\texorpdfstring{$\alpha-$}SSandwiched R{\'e}nyi Divergence (\texorpdfstring{$\alpha-$}SSRD)}\label{sandwichedRRE} This section describes another way to generalize the Umegaki relative entropy and mentions some known results as well.  For $\alpha\in(-\infty,1)\cup(1,\infty)$, \cite{sandwichDefinition} and \cite{wildeSandwiched} introduce a new family of R{\'e}nyi relative entropies called \emph{Sandwiched R{\'e}nyi Relative Entropies} or \emph{Sandwiched R{\'e}nyi Divergence}. They are defined as  $$\tilde{D}_\alpha(\rho||\sigma):=\frac{1}{\alpha-1}\log\left(\mathrm{Tr}\left[\sigma^{\frac{1-\alpha}{2\alpha}}\rho\sigma^{\frac{1-\alpha}{2\alpha}}\right]^\alpha\right),$$ provided that $\mathrm{supp}(\rho)\subseteq \mathrm{supp}(\sigma)$. The DPI for all $\alpha\in[\frac{1}{2},1)\cup(1,\infty)$ was proven in \cite{Beigi13}. That is 
     $$\tilde{D}_\alpha(\rho||\sigma)\geq \tilde{D}_\alpha(\Lambda(\rho)||\Lambda(\sigma)).$$ 
In \cite{sandwichConditions}, it was shown that the equality of the DPI is satisfied for all $\alpha>\frac{1}{2}$, if and only if states $\rho$ and $\sigma$ have the following algebraic form:         $$\sigma^{\frac{1-\alpha}{2\alpha}}(\sigma^{\frac{1-\alpha}{2\alpha}}\rho\sigma^{\frac{1-\alpha}{2\alpha}})^{\alpha-1}\sigma^{\frac{1-\alpha}{2\alpha}}=\Lambda^*\left(\Lambda(\sigma)^{\frac{1-\alpha}{2\alpha}}(\Lambda(\sigma)^{\frac{1-\alpha}{2\alpha}}\rho\Lambda(\sigma)^{\frac{1-\alpha}{2\alpha}})^{\alpha-1}\Lambda(\sigma)^{\frac{1-\alpha}{2\alpha}}\right).$$ 
For $\alpha>1$, the equivalence between saturating the DPI and the sufficiency property is proven in \cite{sandwichedRecovery}. The techniques use non-commutative interpolated $L_p$ spaces for von Neumann Algebras. The same author proved the equivalence between saturating the DPI and the sufficiency for $\alpha\in(\frac{1}{2},1)$ in \cite{Jenvcova2} using different norms.

\subsubsection{\texorpdfstring{$\alpha-z $ }RR{\'e}nyi Relative Entropy (\texorpdfstring{$\alpha-z$ }RRRE)}\label{alphaZ}
Here is another generalization of relative entropy that combines both $\alpha$-RRE and $\alpha$-SRD.
Let $\rho,\sigma\in\cD(\cH)$ with $\alpha\in\bR\setminus\{1\}$ and $z>0$. The $\alpha-z$ R{\'e}nyi relative entropy was introduced in \cite{AdAlphaZrenyi} and is defined as
\begin{equation}
D_{\alpha,z}(\rho||\sigma):=\frac{1}{\alpha-1}\log\left(\mathrm{Tr}\left[\left(\sigma^{\frac{1-\alpha}{2z}}\rho^{\frac{\alpha}{z}}\sigma^{\frac{1-\alpha}{2z}}\right)^z\right]\right),
\end{equation}
provided the $\mathrm{supp}(\rho)\subseteq\mathrm{supp}({\sigma})$. Otherwise, the $\alpha-z$ R{\'e}nyi relative entropy is said to be $+\infty$. When $z=1$, the $\alpha-z$ RRE reduces to the $\alpha$-RRE. When $z=\alpha$, the $\alpha-z$ RRE reduces to the $\alpha$-SRD. When $z=1$ and $\alpha\to1$ or when $z=\alpha$ and $\alpha\to1$, the $\alpha-z$ RRE reduces to the Umegaki relative entropy.

Many of the interesting properties of the $\alpha-z$ entropies are explained and introduced in previous works on this family of entropies such as \cite{AdAlphaZrenyi, summaryPaper}. Only the properties that are explicitly used in this paper are listed here. 
\begin{enumerate}
    \item \label{invariance}
    \emph{Invariance:} The $\alpha-z$ R{\'e}nyi entropies are invariant under unitaries. That is for any unitary $U$, 
    $$D_{\alpha,z}(U\rho U^*||U\sigma U^*)= D_{\alpha,z}(\rho||\sigma).$$ 
    This is because for any unitary $U$ and for any operator $A$, the eigenvalues of $A$ and $UAU^*$ are the same.
    \item \label{tensor} 
    \emph{Tensor Property:} For any $\rho,\sigma,\tau\in\cD(\cH)$, 
    $$D_{\alpha,z}(\rho\otimes\tau||\sigma\otimes\tau)=D_{\alpha,z}(\rho||\sigma). $$
    This is due to the fact that the trace of a tensor product between two states is the product of trace of states.
\end{enumerate}

\begin{remark}
In this paper, it is always assumed that the operators, $\rho$ and $\sigma$, are invertible and that $\mathrm{supp}(\rho)\subseteq\mathrm{supp}(\sigma)$.
\end{remark}
The conjecture for which parameters of $\alpha$ and $z$ the DPI holds is outlined in \cite{summaryPaper} and finally concluded in \cite{CFLconjecture}. This is summarized in the next theorem. 
\begin{theorem}\cite[Theorem 1.2]{CFLconjecture}\label{alphazDPI}
The $\alpha-z$ R{\'e}nyi relative entropy is monotone under completely positive trace preserving maps (quantum channels) on $\cD(\cH)$ for all $\cH$ if and only if one of the following holds
\begin{enumerate}
    \item $0<\alpha<1$ and $z\geq\max{\{\alpha,1-\alpha\}}$;
    \item $1<\alpha\leq2$ and $\frac{\alpha}{2}\leq z\leq \alpha$; 
    \item $2\leq \alpha<\infty$ and $\alpha-1\leq z\leq\alpha$.
\end{enumerate}
\end{theorem}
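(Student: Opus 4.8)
\emph{Overview of the plan.} The idea is to peel the entropic data‑processing statement down to a purely matrix‑analytic (co)convexity question and then quote the sharp trace‑function theorems that settle it. Throughout write
$Q_{\alpha,z}(\rho\|\sigma):=\mathrm{Tr}\big[(\sigma^{\frac{1-\alpha}{2z}}\rho^{\frac{\alpha}{z}}\sigma^{\frac{1-\alpha}{2z}})^{z}\big]$, so that $D_{\alpha,z}=\frac{1}{\alpha-1}\log Q_{\alpha,z}$. Putting $p=\frac{\alpha}{z}$, $q=\frac{1-\alpha}{z}$, $s=z$, one has $Q_{\alpha,z}(\rho\|\sigma)=\Psi_{p,q,s}(\rho,\sigma)$ where $\Psi_{p,q,s}(A,B):=\mathrm{Tr}\big[(B^{q/2}A^{p}B^{q/2})^{s}\big]$, and the $\alpha$-$z$ family occupies exactly the slice $s=\tfrac{1}{p+q}$ of this three‑parameter family (equivalently $\Psi=\|A^{p/2}B^{q/2}\|_{2/(p+q)}^{2/(p+q)}$).

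\emph{Step 1: DPI $\iff$ (co)convexity of $Q_{\alpha,z}$.} First I would record two elementary facts, both immediate from functional calculus: $Q_{\alpha,z}(t\rho\|t\sigma)=t\,Q_{\alpha,z}(\rho\|\sigma)$ for $t>0$, and $Q_{\alpha,z}$ is additive on direct sums. Together with the unitary invariance and tensor property listed in the excerpt, these make the classical Lindblad--Uhlmann argument \cite{lindblad1974expectations,uhlmann1973} run in both directions. For ``$\Leftarrow$'': by Stinespring every channel factors as an isometric embedding (which leaves $Q_{\alpha,z}$ unchanged, since on its range the embedded operators are unitarily equivalent to the originals) followed by a partial trace $\mathrm{Tr}_2$; and for $\mathrm{Tr}_2$ the Haar average $\int(I_1\otimes U)\rho_{12}(I_1\otimes U^{*})\,dU=\rho_1\otimes\tfrac{I_2}{\dim\cH_2}$, combined with joint convexity (resp.\ concavity) of $Q_{\alpha,z}$ and the tensor property, yields $Q_{\alpha,z}(\rho_1\|\sigma_1)\le Q_{\alpha,z}(\rho_{12}\|\sigma_{12})$ (resp.\ $\ge$), i.e.\ DPI for $Q_{\alpha,z}$; dividing by $\alpha-1$, whose sign differs for $\alpha>1$ and $\alpha<1$, gives DPI for $D_{\alpha,z}$. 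For ``$\Rightarrow$'' one channel suffices: apply DPI to the partial trace $X\mapsto X_{11}+X_{22}$ on the block‑diagonal states $\lambda\rho_1\oplus(1-\lambda)\rho_2$, $\lambda\sigma_1\oplus(1-\lambda)\sigma_2$; homogeneity plus direct‑sum additivity then read off joint convexity of $Q_{\alpha,z}$ when $\alpha>1$ and joint concavity when $\alpha<1$. Net result: DPI holds for all channels iff $Q_{\alpha,z}$ is jointly convex ($\alpha>1$) or jointly concave ($\alpha<1$).

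\emph{Step 2: the matrix analysis.} Now I would invoke the sharp characterization of (co)convexity of $\Psi_{p,q,1/(p+q)}$. For $0<\alpha<1$ one has $p,q>0$ on this slice, and $\Psi_{p,q,1/(p+q)}$ is jointly concave exactly when $0\le p\le1$ and $0\le q\le1$; sufficiency is the boundary case of Hiai's concavity theorem \cite{hiaiconcavity} (resting on Lieb's concavity theorem and operator concavity of $t\mapsto t^{1/(p+q)}$ for $p+q\ge1$), and necessity comes from explicit small‑dimensional counterexamples, the first of which appear already in \cite{AdAlphaZrenyi}. Since $0\le\frac{\alpha}{z}\le1$ and $0\le\frac{1-\alpha}{z}\le1$ jointly say $z\ge\max\{\alpha,1-\alpha\}$, this gives case (1). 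For $\alpha>1$ one has $q<0$ and (whenever $z\le\alpha$, i.e.\ $p\ge1$) $p\ge1$, and $\Psi_{p,q,1/(p+q)}$ is jointly \emph{convex} exactly when $1\le p\le2$ and $-1\le q\le0$; sufficiency here is the deep ingredient --- it is the Carlen--Frank--Lieb convexity conjecture, settled in \cite{CFLconjecture} (with sub‑ranges in \cite{carlen2016some,hiaiconcavity}) via a duality/variational reduction to Wigner--Yanase--Dyson- and Lieb‑type (anti)concavity --- while necessity is again explicit counterexamples. Translating, $1\le\frac{\alpha}{z}\le2$ is $\frac{\alpha}{2}\le z\le\alpha$, and $-1\le\frac{1-\alpha}{z}\le0$ is $z\ge\alpha-1$ (the upper half automatic); since $\frac{\alpha}{2}\ge\alpha-1\iff\alpha\le2$, the conjunction collapses to case (2) for $1<\alpha\le2$ and to case (3) for $\alpha\ge2$. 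Assembling the three regimes gives the theorem.

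\emph{Main obstacle.} Essentially all the difficulty is concentrated in the joint convexity of $\Psi_{p,q,1/(p+q)}$ for $1<p\le2$, $-1\le q<0$ --- i.e.\ the DPI for $1<\alpha\le2$, $\frac{\alpha}{2}\le z<\alpha$ and for $2\le\alpha$, $\alpha-1\le z<\alpha$: this is exactly the resolved Carlen--Frank--Lieb conjecture and is genuinely hard, whereas Step 1 and the concave regime are comparatively routine. A minor point needing care in Step 1 is the support hypothesis $\mathrm{supp}(\rho)\subseteq\mathrm{supp}(\sigma)$ under composition with arbitrary channels, but under the paper's standing full‑rank assumption this is harmless.
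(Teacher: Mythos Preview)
Your proposal is correct and follows precisely the route the paper indicates: the paper does not give its own detailed proof of this cited result but points to the reduction of DPI to joint convexity/concavity of $\Psi_{\alpha,z}$ (Theorem~\ref{dpi1}, i.e.\ \cite[Proposition~7]{summaryPaper}) together with the sharp characterization of that (co)convexity (Theorem~\ref{Reduced1}, i.e.\ \cite[Theorem~1.1]{CFLconjecture}). Your Step~1 is exactly Theorem~\ref{dpi1} and your Step~2 is exactly Theorem~\ref{Reduced1} with the parameter translation $p=\alpha/z$, $q=(1-\alpha)/z$, $s=1/(p+q)=z$, so the approaches coincide.
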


\noindent One way to prove this is through the relationship between the DPI and joint convexity/concavity of the trace functional defined by the map 
\begin{equation}\label{functional}
(A,B)\mapsto\mathrm{Tr}(B^{\frac{q}{2}}K^*A^pKB^{\frac{q}{2}})^s,
\end{equation}
where $A$ and $B$ are positive operators on $\cH$, $K$ is any operator in $\cB(\cH)$, $p,q>0$, and $s\geq\frac{1}{p+q}$. Here, the case of interest is whenever $K$ is the identity operator, $p=\frac{\alpha}{z}$, $q=\frac{1-\alpha}{z}$, $s=\frac{1}{p+q}$, $A=\rho$, and $B=\sigma$. Then the trace functional is defined as
$$\Psi_{\alpha,z}(\rho||\sigma):=\mathrm{Tr}\left[\left(\sigma^{\frac{1-\alpha}{2z}}\rho^{\frac{\alpha}{z}}\sigma^{\frac{1-\alpha}{2z}}\right)^z\right]. $$

The next theorem describes the relationship between the DPI and joint convexity and joint concavity of the trace functional.

\begin{theorem}\cite[Proposition 7]{summaryPaper}\label{dpi1}
Let $\alpha,z>0$ with $\alpha\neq1$. Then $D_{\alpha,z}$ is monotone under quantum channels on $\cP(\cH)$  (for any finite dimensional $\cH$) if and only if one of the following holds:
\begin{enumerate}
    \item $\alpha<1$ and $\Psi_{\alpha,z}(\rho||\sigma)$ is jointly concave.
    \item $\alpha>1$ and $\Psi_{\alpha,z}(\rho||\sigma)$ is jointly convex.
\end{enumerate}
\end{theorem}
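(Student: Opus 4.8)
\section*{Proof proposal}

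The plan is to prove the equivalence by unwinding the logarithm, turning the data processing inequality for $D_{\alpha,z}$ into a monotonicity statement for the trace functional $\Psi_{\alpha,z}$ itself. Since $D_{\alpha,z}(\rho\|\sigma)=\frac{1}{\alpha-1}\log\Psi_{\alpha,z}(\rho\|\sigma)$ and $t\mapsto\log t$ is strictly increasing, the inequality $D_{\alpha,z}(\rho\|\sigma)\geq D_{\alpha,z}(\Lambda(\rho)\|\Lambda(\sigma))$ is equivalent to $\Psi_{\alpha,z}(\rho\|\sigma)\geq\Psi_{\alpha,z}(\Lambda(\rho)\|\Lambda(\sigma))$ when $\alpha>1$ (where $\frac{1}{\alpha-1}>0$) and to $\Psi_{\alpha,z}(\rho\|\sigma)\leq\Psi_{\alpha,z}(\Lambda(\rho)\|\Lambda(\sigma))$ when $\alpha<1$ (where $\frac{1}{\alpha-1}<0$). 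Thus it suffices to show, for $\alpha>1$, that $\Psi_{\alpha,z}$ is nonincreasing under every quantum channel if and only if it is jointly convex, and dually for $\alpha<1$ that $\Psi_{\alpha,z}$ is nondecreasing under every quantum channel if and only if it is jointly concave. I will carry out the case $\alpha>1$; the case $\alpha<1$ is identical with every inequality reversed. Throughout I use that $\Psi_{\alpha,z}$ is invariant under simultaneous unitary conjugation (property \ref{invariance}) and that $\Psi_{\alpha,z}(\rho\otimes\tau\|\sigma\otimes\tau)=\Psi_{\alpha,z}(\rho\|\sigma)$ whenever $\tau$ is a density operator, since $\Psi_{\alpha,z}(\tau\|\tau)=\mathrm{Tr}(\tau)=1$ (property \ref{tensor}).

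For the direction from joint convexity to monotonicity, I would write a general channel in Stinespring form $\Lambda(\rho)=\mathrm{Tr}_{E}\!\left(U(\rho\otimes\tau_0)U^{*}\right)$ with $\tau_0$ a fixed state on an ancilla $E$ and $U$ a unitary on $\cK\otimes E$, and then undo the partial trace by a Weyl twirl. Setting $Y_\rho=U(\rho\otimes\tau_0)U^{*}$, the discrete Weyl (Heisenberg) group $G$ on $E$ implements the completely depolarizing channel there:
\begin{equation}
\frac{1}{|G|}\sum_{V\in G}(\mathbbm{1}_{\cK}\otimes V)\,Y_\rho\,(\mathbbm{1}_{\cK}\otimes V)^{*}=\Lambda(\rho)\otimes\frac{\mathbbm{1}_{E}}{d_{E}},
\end{equation}
and similarly for $\sigma$. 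Applying joint convexity of $\Psi_{\alpha,z}$ to this uniform average, then unitary invariance to each summand, then invariance under $U$ together with the tensor property to strip off $\tau_0$, yields $\Psi_{\alpha,z}\!\left(\Lambda(\rho)\otimes\tfrac{\mathbbm{1}_E}{d_E}\,\big\|\,\Lambda(\sigma)\otimes\tfrac{\mathbbm{1}_E}{d_E}\right)\leq\Psi_{\alpha,z}(\rho\|\sigma)$; a final use of the tensor property on the left removes the maximally mixed factor and gives $\Psi_{\alpha,z}(\Lambda(\rho)\|\Lambda(\sigma))\leq\Psi_{\alpha,z}(\rho\|\sigma)$, the desired monotonicity.

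For the converse, I would assume monotonicity under every quantum channel, in particular under the partial trace, and deduce joint convexity by the block-diagonal embedding. Given pairs $(\rho_1,\sigma_1),(\rho_2,\sigma_2)$ and $\lambda\in[0,1]$, set $\hat\rho=\lambda\,\rho_1\otimes|1\rangle\langle1|+(1-\lambda)\,\rho_2\otimes|2\rangle\langle2|$ and $\hat\sigma=\lambda\,\sigma_1\otimes|1\rangle\langle1|+(1-\lambda)\,\sigma_2\otimes|2\rangle\langle2|$ on $\cH\otimes\bC^2$. Because these operators are block diagonal, so is $\hat\sigma^{\frac{1-\alpha}{2z}}\hat\rho^{\frac{\alpha}{z}}\hat\sigma^{\frac{1-\alpha}{2z}}$, and collecting the scalar powers on each block (the exponents on $\lambda$ sum to $\frac{1-\alpha}{z}+\frac{\alpha}{z}=\frac{1}{z}$, which becomes a single factor $\lambda$ after raising to the $z$-th power) gives the exact additivity $\Psi_{\alpha,z}(\hat\rho\|\hat\sigma)=\lambda\,\Psi_{\alpha,z}(\rho_1\|\sigma_1)+(1-\lambda)\,\Psi_{\alpha,z}(\rho_2\|\sigma_2)$. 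Tracing out the $\bC^2$ register sends $\hat\rho,\hat\sigma$ to $\lambda\rho_1+(1-\lambda)\rho_2$ and $\lambda\sigma_1+(1-\lambda)\sigma_2$, so monotonicity under this partial trace is exactly the two-point joint convexity inequality; for $\alpha<1$ the inequality reverses and yields joint concavity.

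The main obstacle I anticipate is the reduction in the convexity-to-monotonicity direction: verifying that an arbitrary quantum channel may be replaced, for the purposes of $\Psi_{\alpha,z}$, by appending an ancilla, conjugating by a unitary, and twirling away the environment, and ensuring the support conventions stay valid when $\tau_0$ is pure (hence noninvertible, so the powers in $\Psi_{\alpha,z}$ must be read on supports). Checking the twirl identity and the block additivity is routine algebra, but one must take care to apply the unitary invariance and tensor identities to the functional $\Psi_{\alpha,z}$ rather than only to $D_{\alpha,z}$, and to use the normalization $\Psi_{\alpha,z}(\tau\|\tau)=1$ to strip off the auxiliary factors cleanly.
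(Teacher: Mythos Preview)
The paper does not actually prove this theorem: it is stated with a citation to \cite[Proposition~7]{summaryPaper} and then used as a black box. There is therefore no ``paper's own proof'' against which to compare your proposal directly.

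That said, your argument is the standard Lindblad--Uhlmann route and is correct in outline. Both directions you sketch are in fact the very techniques the present paper deploys later to prove its main result. In the proof of Theorem~\ref{partial case1} the author conjugates by the generalized Pauli unitaries $\{\mathbbm{1}\otimes v_i\}$ and averages to obtain $\tilde\rho=\rho_A\otimes\pi_B$, which is exactly your Weyl twirl identity; and in Corollary~\ref{full case} the author invokes Stinespring dilation to pass from partial traces to arbitrary channels. So your convexity-to-monotonicity direction matches the paper's internal toolkit precisely. Your converse via the block-diagonal embedding and the additivity $\Psi_{\alpha,z}(\hat\rho\|\hat\sigma)=\lambda\,\Psi_{\alpha,z}(\rho_1\|\sigma_1)+(1-\lambda)\,\Psi_{\alpha,z}(\rho_2\|\sigma_2)$ is also standard and correct; the exponent bookkeeping you indicate (the $\lambda^{1/z}$ becoming $\lambda$ after the outer $z$-th power) is the right computation.

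The caution you flag about supports when $\tau_0$ is pure is genuine but easily handled: one may take $\tau_0$ to be a rank-one projector and read all powers on the support, or equivalently embed into the range of $\tau_0$ before invoking invertibility; the tensor identity $\Psi_{\alpha,z}(\rho\otimes\tau\|\sigma\otimes\tau)=\Psi_{\alpha,z}(\rho\|\sigma)$ survives since $\tau^{\frac{1-\alpha}{2z}}\tau^{\frac{\alpha}{z}}\tau^{\frac{1-\alpha}{2z}}=\tau^{1/z}$ on the support of $\tau$, with trace $1$ for a pure state. With that minor clarification your proposal constitutes a complete and correct proof.
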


\noindent Together, Theorem \ref{alphazDPI} and Theorem \ref{dpi1} give the complete picture for data processing inequality of the $\alpha-z$ R{\'e}nyi Relative Entropy.

$ $ 

The next two sections are the technical components used to prove the main result:

\section{Preliminaries}\label{sec: prelims}

\subsection{Tools from Complex Analysis}
Let $\mathrm{Spec}(X)$ denote the set of eigenvalues for operator $X$. If $\Omega$ is an open subset of $\mathbbm{C}$ such that $\mathrm{Spec}(X)\subseteq\Omega\subseteq\mathbbm{C}$, then the analytic functional calculus is used to conclude that $F(X)$ is well defined, for any analytic function $F$. Define 
$$\mathbbm{C}^+ := \{z\in\mathbbm{C} \mbox{ s.t. } \mathrm{Im}(z)>0\}.$$ Note that $\mathbbm{C}^+$ is an open subset of $\mathbbm{C}$. For any operator $X$, define
$$ \mathrm{Re}(X) := \frac{X + X^*}{2} \text{ and } \mathrm{Im}(X) := \frac{X - X^*}{2i}.$$
Then let 
$$ I_n^+:= \{X \in\mathbbm{M}_n(\mathbbm{C}) \mbox{ s.t. } \mathrm{Im}(X)>0\} $$ 
and 
$$I_n^-:= \{X \in\mathbbm{M}_n(\mathbbm{C}) \mbox{ s.t. } \mathrm{Im}(X)<0\},$$ 
where $n<\infty$ denotes the dimension. For $0<p\leq1$, denote 
$$ \Gamma_{p\pi} := \{re^{i\theta} : r>0 \text{ and } 0<\theta<p\pi\} $$
and 
$$\Gamma_{-p\pi} := \{re^{i\theta} : r>0 \text{ and } 0>\theta>-p\pi\}.$$  Note that when $p=1$, $\Gamma_{p\pi}=\mathbbm{C}^+$. Let us recall a few known facts or results.

\begin{lemma}\cite[Lemma 1.1]{hiai2001concavity}\label{hiailemma1}
If $X\in I_n^+$,  then $X$ is invertible and $\mathrm{Spec}(X)\subset\mathbbm{C}^+$. For $X$, an invertible $n\times n$ matrix,  $X\in I_n^+$ if and only if $X^{-1}\in I_n^-$. 
\end{lemma}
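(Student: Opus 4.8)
I would establish the two assertions separately, both by elementary linear algebra. For the first, fix $X\in I_n^+$ and observe that for every vector $v$,
\[
\mathrm{Im}\,\langle v, Xv\rangle
=\frac{\langle v,Xv\rangle-\overline{\langle v,Xv\rangle}}{2i}
=\left\langle v,\frac{X-X^*}{2i}\,v\right\rangle
=\langle v,\mathrm{Im}(X)v\rangle ,
\]
so $\mathrm{Im}(X)>0$ forces $\mathrm{Im}\,\langle v,Xv\rangle>0$ for all $v\neq 0$. If $\lambda$ is an eigenvalue of $X$ with eigenvector $v\neq0$, then $\langle v,Xv\rangle=\lambda\|v\|^2$, whence $\mathrm{Im}(\lambda)>0$. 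As this holds for every eigenvalue, $\mathrm{Spec}(X)\subset\mathbb{C}^{+}$; in particular $0\notin\mathrm{Spec}(X)$, so $X$ is invertible.

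For the equivalence, the plan is to compute $\mathrm{Im}(X^{-1})$ directly in terms of $\mathrm{Im}(X)$. Using $(X^{-1})^*=(X^*)^{-1}$ together with the identity $X^{-1}-(X^*)^{-1}=X^{-1}(X^*-X)(X^*)^{-1}$, one obtains
\[
\mathrm{Im}(X^{-1})
=\frac{X^{-1}-(X^{-1})^*}{2i}
=X^{-1}\,\frac{X^*-X}{2i}\,(X^{-1})^*
=-\,X^{-1}\,\mathrm{Im}(X)\,(X^{-1})^* .
\]
Since $X^{-1}$ is invertible, the congruence $H\mapsto X^{-1}H(X^{-1})^*$ is an order isomorphism on Hermitian matrices: indeed $\langle w,X^{-1}H(X^{-1})^*w\rangle=\langle (X^{-1})^*w,\,H\,(X^{-1})^*w\rangle$ and $(X^{-1})^*w$ runs over all of $\mathbb{C}^n$ as $w$ does. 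Hence $\mathrm{Im}(X)>0$ if and only if $-\mathrm{Im}(X^{-1})=X^{-1}\mathrm{Im}(X)(X^{-1})^*>0$, i.e.\ if and only if $\mathrm{Im}(X^{-1})<0$, which is precisely $X\in I_n^+\iff X^{-1}\in I_n^-$. By the first part the invertibility hypothesis is automatic in the forward direction and is needed only to make sense of $X^{-1}$ in the converse; applying the displayed identity with $X$ replaced by $X^{-1}$ also gives the converse directly.

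There is no real obstacle here: the only points needing care are the bookkeeping of adjoints and of the factor $2i$ when passing between $\mathrm{Im}$ of an operator and the associated Hermitian form, and the observation that congruence by an invertible matrix preserves definiteness. An alternative, less computational route would run through the Cayley transform $X\mapsto (X-i)(X+i)^{-1}$ or the analytic functional calculus applied to $z\mapsto z^{-1}$, which is holomorphic on a neighbourhood of $\mathbb{C}^{+}$ once $\mathrm{Spec}(X)\subset\mathbb{C}^{+}$ is known; but the identities above make the direct argument shorter and self-contained.
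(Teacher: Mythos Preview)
Your argument is correct. The eigenvector computation for the first assertion and the identity $\mathrm{Im}(X^{-1})=-X^{-1}\,\mathrm{Im}(X)\,(X^{-1})^*$ for the second are both valid, and the appeal to invariance of definiteness under congruence is exactly what is needed.

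As for comparison: the paper does not supply its own proof of this lemma at all; it is simply quoted as \cite[Lemma~1.1]{hiai2001concavity} and used as a black box. Your proposal therefore goes beyond what the paper does by actually furnishing a self-contained elementary argument. The original source (Hiai) proceeds along essentially the same lines you outline, so your approach is the standard one, and there is nothing to contrast.
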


\begin{lemma}\cite[Lemma 1.2]{hiai2001concavity}\label{hiailemma2}
Let $0<p\leq1$. If $X\in I^+_n$, then so is $X^p$ and $e^{-ip\pi}X^p\in I_n^-$. If $X\in I_n^-$, then $X^p\in I_n^-$ and $e^{ip\pi}X^p\in I_n^+$. 
\end{lemma}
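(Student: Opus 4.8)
The case $p=1$ is trivial ($X^1=X$, and $e^{\mp i\pi}X=-X$ has $\mathrm{Im}(-X)=-\mathrm{Im}(X)$), so assume $0<p<1$. I would deduce everything from the single implication
\begin{equation}\label{eq:core}
X\in I_n^+\ \Longrightarrow\ \mathrm{Im}(X^p)>0 ,
\end{equation}
together with Lemma \ref{hiailemma1}. To prove \eqref{eq:core} the plan is to use the operator integral representation of the fractional power. Starting from the scalar identity
\[
w^p=\frac{\sin(p\pi)}{\pi}\int_0^\infty \lambda^{p-1}\,\frac{w}{w+\lambda}\,d\lambda,\qquad w\in\mathbbm{C}\setminus(-\infty,0],
\]
which is checked for $w>0$ via $\int_0^\infty t^{p-1}(1+t)^{-1}\,dt=\pi/\sin(p\pi)$ and then extended by analytic continuation, and noting that $X\in I_n^+$ forces $\mathrm{Spec}(X)\subset\mathbbm{C}^+\subseteq\mathbbm{C}\setminus(-\infty,0]$ by Lemma \ref{hiailemma1}, the holomorphic functional calculus gives
\[
X^p=\frac{\sin(p\pi)}{\pi}\int_0^\infty \lambda^{p-1}\,X(X+\lambda)^{-1}\,d\lambda .
\]

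Next I would analyze the integrand pointwise. For each $\lambda>0$ the matrix $X+\lambda$ again lies in $I_n^+$ (a real shift does not affect $\mathrm{Im}$), so Lemma \ref{hiailemma1} yields $(X+\lambda)^{-1}\in I_n^-$, i.e.\ $\mathrm{Im}\!\big((X+\lambda)^{-1}\big)<0$. Writing $X(X+\lambda)^{-1}=I-\lambda(X+\lambda)^{-1}$ and using $\mathrm{Im}(I)=0$ gives
\[
\mathrm{Im}\!\big(X(X+\lambda)^{-1}\big)=-\lambda\,\mathrm{Im}\!\big((X+\lambda)^{-1}\big)>0,\qquad\lambda>0 .
\]
Since $\sin(p\pi)>0$ and $\lambda^{p-1}>0$, integrating this pointwise positive-definite, continuous integrand — convergence near $\lambda=0$ and $\lambda=\infty$ is routine — yields $\mathrm{Im}(X^p)>0$, establishing \eqref{eq:core}.

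I would then obtain the remaining three assertions by symmetry. If $X\in I_n^-$, then $X^*\in I_n^+$ since $\mathrm{Im}(X^*)=-\mathrm{Im}(X)$, and because $\mathrm{Spec}(X)$ avoids $(-\infty,0]$ the principal power satisfies $(X^p)^*=(X^*)^p$; applying \eqref{eq:core} to $X^*$ gives $(X^*)^p\in I_n^+$, hence $\mathrm{Im}(X^p)=-\mathrm{Im}\!\big((X^p)^*\big)<0$, i.e.\ $X^p\in I_n^-$. For the phase factors, if $X\in I_n^+$ then $-X\in I_n^-$, and comparing principal branches on $\mathbbm{C}^+$ shows $(-X)^p=e^{-ip\pi}X^p$; applying the $I_n^-$ case just proved to $-X$ gives $e^{-ip\pi}X^p\in I_n^-$. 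Symmetrically, if $X\in I_n^-$ then $-X\in I_n^+$ and $(-X)^p=e^{ip\pi}X^p$, so \eqref{eq:core} gives $e^{ip\pi}X^p\in I_n^+$.

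\textbf{Main obstacle.} The one genuinely nontrivial point is \eqref{eq:core}: the spectral information $\mathrm{Spec}(X^p)\subset\Gamma_{p\pi}\subset\mathbbm{C}^+$ that the functional calculus supplies for free is \emph{not} sufficient for the operator inequality $\mathrm{Im}(X^p)>0$, since $X^p$ need not be normal. The integral representation is exactly the device that upgrades the eigenvalue statement to a positive-definiteness statement, and the step needing care is justifying the interchange of the functional calculus with the $\lambda$-integral (e.g.\ via Fubini applied to the Riesz contour formula $X^p=\frac{1}{2\pi i}\oint z^p(z-X)^{-1}\,dz$) together with convergence of the integral at both endpoints.
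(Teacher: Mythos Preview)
The paper does not supply its own proof of this lemma; it is quoted verbatim from \cite{hiai2001concavity} without argument, so there is nothing in the present paper to compare against. Your integral-representation proof is correct and is in fact the standard route (and essentially the one used in the cited reference): writing $X^p=\frac{\sin(p\pi)}{\pi}\int_0^\infty\lambda^{p-1}\big(I-\lambda(X+\lambda)^{-1}\big)\,d\lambda$ and invoking Lemma~\ref{hiailemma1} on each $(X+\lambda)^{-1}$ is exactly how one upgrades the spectral inclusion to the genuine operator inequality $\mathrm{Im}(X^p)>0$, and your reductions via $X\mapsto X^*$ and $X\mapsto -X$ for the remaining three assertions are clean and correct.
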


\begin{lemma}\label{conjugation}
Any pair of operators $A$ and $B$ have the following properties:
$$\mathrm{Re}(ABA^*)=A(\mathrm{Re}B)A^* \mbox{ }\mathrm{and}\mbox{ }  \mathrm{Im}(ABA^*)=A(\mathrm{Im}B)A^*.$$
\end{lemma}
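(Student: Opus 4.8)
The plan is to simply unwind the definitions $\mathrm{Re}(X) = \tfrac{1}{2}(X+X^*)$ and $\mathrm{Im}(X) = \tfrac{1}{2i}(X-X^*)$ given above, and to use the elementary adjoint identity $(ABA^*)^* = AB^*A^*$. The latter follows from applying $(XY)^* = Y^*X^*$ twice together with $(A^*)^* = A$: indeed $(ABA^*)^* = (A^*)^*(AB)^* = A B^* A^*$.

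First I would treat the real part. By definition $\mathrm{Re}(ABA^*) = \tfrac{1}{2}\bigl(ABA^* + (ABA^*)^*\bigr)$; substituting the adjoint identity and factoring $A$ on the left and $A^*$ on the right yields $\tfrac{1}{2}A(B+B^*)A^* = A\bigl(\mathrm{Re}\,B\bigr)A^*$, where the scalar $\tfrac12$ may be pulled inside freely since scalar multiples commute with operator composition. The imaginary part is identical up to the sign and the factor $\tfrac{1}{2i}$: $\mathrm{Im}(ABA^*) = \tfrac{1}{2i}\bigl(ABA^* - AB^*A^*\bigr) = \tfrac{1}{2i}A(B-B^*)A^* = A\bigl(\mathrm{Im}\,B\bigr)A^*$. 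This completes both identities.

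There is no genuine obstacle here; the computation is purely formal. The only point requiring a bit of care is that $A$ is not assumed self-adjoint, so when taking the adjoint of the triple product one must keep $A$ and $A^*$ in their correct positions, writing $(ABA^*)^* = (A^*)^*B^*A^* = AB^*A^*$ rather than conflating $A$ with $A^*$. Once that is observed, the two displayed equalities follow by the factorizations above.
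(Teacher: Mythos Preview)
Your proof is correct and follows essentially the same approach as the paper's own proof: expand the definitions of $\mathrm{Re}$ and $\mathrm{Im}$, use $(ABA^*)^* = AB^*A^*$, and factor. The paper in fact only writes out the imaginary-part computation and notes that the real part is similar, whereas you spell out both.
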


\begin{proof}
Observe that 
\begin{align}
    \mathrm{Im}(ABA^*) & = \frac{(ABA^*) - (ABA^*)^*}{2i} \\
    & = \frac{(ABA^*)-(AB^*A^*)}{2i} \\
    & = \frac{A(B-B^*)A^*}{2i} \\
    & = A(\mathrm{Im}B)A^*,
\end{align} and the real version is similar. 
\end{proof}

\begin{lemma}\label{pos.conj}
If $B\in I^+_n$, then so is $ABA^*$, for any non-zero operator $A$. 
\end{lemma}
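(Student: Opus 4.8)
The plan is to compute the imaginary part of $ABA^*$ directly using Lemma~\ref{conjugation}. Recall that $B\in I_n^+$ means precisely $\mathrm{Im}(B)>0$. By Lemma~\ref{conjugation}, $\mathrm{Im}(ABA^*)=A(\mathrm{Im}\,B)A^*$, so the claim reduces to the purely positivity-theoretic statement that conjugation by $A$ preserves strict positivity, i.e. that $APA^*>0$ whenever $P>0$.

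To handle that, I would factor $P=\mathrm{Im}(B)$ through its positive square root, $P=P^{1/2}P^{1/2}$, which exists and is invertible since $P>0$. Then $APA^*=(AP^{1/2})(AP^{1/2})^*=:MM^*\geq 0$ automatically. To upgrade semidefiniteness to strict positivity, note that for a nonzero vector $v$ one has $\langle v, APA^*v\rangle=\langle A^*v, P A^*v\rangle$, which is $>0$ as soon as $A^*v\neq 0$; equivalently, $MM^*>0$ iff $\ker M^*=\{0\}$, and since $P^{1/2}$ is invertible, $\ker M^*=\ker(P^{1/2}A^*)=\ker A^*$. Hence $APA^*>0$, and therefore $\mathrm{Im}(ABA^*)>0$, precisely when $A^*$ (equivalently $A$, as $n<\infty$) is injective, i.e. invertible.

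The point requiring care is thus the hypothesis on $A$: the two-line computation above delivers $ABA^*\in I_n^+$ when $A$ is invertible (more generally whenever $\ker A^*=\{0\}$), but a merely nonzero $A$ is not sufficient, since a rank-deficient $A$ leaves $A(\mathrm{Im}\,B)A^*$ only positive semidefinite. In every use of this lemma in the sequel $A$ is a power of an invertible density operator (cf. the standing assumption that $\rho,\sigma$ are invertible), so I would either state the lemma for invertible $A$ or record this as an implicit hypothesis, after which the proof is exactly the computation indicated.
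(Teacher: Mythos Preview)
Your argument is essentially the paper's own: both invoke Lemma~\ref{conjugation} to reduce to showing $A(\mathrm{Im}B)A^*>0$, then factor through the positive square root of $\mathrm{Im}B$ to rewrite the quadratic form as a squared norm $\|(\mathrm{Im}B)^{1/2}A^*y\|^2$.

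You are, however, more careful than the paper on one point. The paper simply asserts the final inner product is $>0$ for any $y$, but as you correctly observe, this requires $A^*y\neq 0$ for every nonzero $y$, i.e.\ $\ker A^*=\{0\}$; a merely ``non-zero'' $A$ does not suffice. Your diagnosis that the lemma really needs $A$ invertible (which, as you note, is satisfied in every subsequent application, where $A$ is $Z$, $Z^{-1}$, or a power of an invertible state) is a genuine improvement over the paper's statement and proof.
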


\begin{proof}
Lemma \ref{conjugation}, implies that $\mathrm{Im}(ABA^*)=A(\mathrm{Im}B)A^*$. Thus for any $y$, 
\begin{align}
    \left\langle 
    \mathrm{Im}(ABA)^*y,y\right\rangle & = 
    \left\langle 
    A(\mathrm{Im}B)A^*y,y\right\rangle \\
    & = \left\langle (\mathrm{Im}B)A^*y,A^*y\right\rangle \\
    & = \left\langle \left[(\mathrm{Im}B)^{\frac{1}{2}}\right]A^*y,\left[(\mathrm{Im}B)^{\frac{1}{2}}\right]A^*y\right\rangle >0,
\end{align} where the last equality holds because $\mathrm{Im}(B)$ is positive providing a unique square root that is also positive. 
\end{proof}

\begin{remark}\label{remark: newset}
Define
\begin{equation}\label{eq: big.assumption.set}
  \mathcal{I}(\mathcal{H}):=  \left\{X \in\mathcal{P}(\mathcal{H}): (zX+H)^{\frac{p}{2}}B\,(zX+H)^{\frac{p}{2}}\in I^+_n \right\},
\end{equation}
for all $H$ hermitian, $z\in\mathbbm{C}^+$, $0<p\leq1$, and $B\in\mathcal{P}(\mathcal{H})$.
\end{remark}

\begin{proposition}\label{newPick}
Define $\Phi_Z(X) := Z^*XZ$, where $Z$ is an invertible operator. For any $0<p<1$, a positive operator $X$ belonging to any convex subset of $\mathcal{I}(\mathcal{H})$ is such that the map 
 
\begin{equation}
    X \mapsto \mathrm{Tr}\left[\left\{\Phi_Z\left(X^{\frac{p}{2}}\right)A \,\Phi_Z\left(X^{\frac{p}{2}}\right)\right\}^{\frac{1}{p}}\right]
\end{equation}
is concave for any $A\in\cP(\cH)$.
\end{proposition}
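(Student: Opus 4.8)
\section*{Proof proposal}

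The plan is to run Epstein's complex-analytic method, in the matrix form developed by Hiai \cite{hiai2001concavity}, after two standard reductions. Write $F(X):=\mathrm{Tr}[\{\Phi_Z(X^{p/2})A\Phi_Z(X^{p/2})\}^{1/p}]$. Since $X^{p/2}$ enters twice and the result is raised to the power $1/p$, $F$ is homogeneous of degree one, $F(\lambda X)=\lambda F(X)$ for $\lambda>0$; moreover $\mathcal{I}(\mathcal{H})$ is a cone, because $z(\lambda X)+H=\lambda(zX+\lambda^{-1}H)$ with $\lambda^{-1}H$ again Hermitian, so the convex cone $\widehat{\mathcal{C}}$ generated by a convex $\mathcal{C}\subseteq\mathcal{I}(\mathcal{H})$ lies in $\mathcal{I}(\mathcal{H})$, and it suffices to prove $F$ concave on $\widehat{\mathcal{C}}$. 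For a degree-one homogeneous function on a convex cone, concavity is equivalent to superadditivity $F(X+Y)\ge F(X)+F(Y)$, and superadditivity follows once one knows that for $X\in\widehat{\mathcal{C}}$ and $Y\in\widehat{\mathcal{C}}$ the scalar function $g(t):=F(X+tY)$ is concave on $[0,\infty)$: indeed $t^{-1}g(t)=F(t^{-1}X+Y)\to F(Y)$ by continuity of $F$, so $F(Y)$ is the asymptotic slope of the concave function $g$, whence $g(1)\ge g(0)+F(Y)$. Thus everything reduces to showing $g(t)=F(X+tY)$ is concave on $[0,\infty)$ whenever $X\in\mathcal{P}(\mathcal{H})$ and $Y\in\mathcal{I}(\mathcal{H})$.

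I would then continue $g$ into $\mathbbm{C}^+$, setting $g(z):=\mathrm{Tr}[\{\Phi_Z((X+zY)^{p/2})A\Phi_Z((X+zY)^{p/2})\}^{1/p}]$ with the principal branches of $w\mapsto w^{p/2}$ and $w\mapsto w^{1/p}$ furnished by the analytic functional calculus. Since $\mathrm{Im}(X+zY)=\mathrm{Im}(z)\,Y>0$, Lemma \ref{hiailemma1} gives $X+zY\in I_n^+$ and $\mathrm{Spec}(X+zY)\subset\mathbbm{C}^+$, so all powers are defined, $g$ is holomorphic on $\mathbbm{C}^+$, and it extends continuously, with real values, to the half-line $\{t\in\mathbb{R}:X+tY>0\}\supseteq[0,\infty)$ (there the operator inside the trace equals $G^\ast G\ge0$ with $G=A^{1/2}\Phi_Z((X+tY)^{p/2})$). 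Viewing $X+zY=zY+X$ as an instance of $zX'+H$ with $X'=Y\in\mathcal{I}(\mathcal{H})$ and $H=X$ Hermitian, the defining property of $\mathcal{I}(\mathcal{H})$ (Remark \ref{remark: newset}, used with exponent $p/2\in(0,1]$ and the positive operator $ZAZ^\ast$) gives $(X+zY)^{p/2}(ZAZ^\ast)(X+zY)^{p/2}\in I_n^+$, and Lemma \ref{pos.conj} (conjugation by the invertible $Z^\ast$) upgrades this to $V(z):=\Phi_Z((X+zY)^{p/2})A\Phi_Z((X+zY)^{p/2})\in I_n^+$. Combining this with Lemma \ref{hiailemma2} — which confines the numerical range of $(X+zY)^{p/2}$ to the sector $\overline{\Gamma_{p\pi/2}}$ — together with cyclicity of the trace (Lemma \ref{conjugation}), I would localize $\mathrm{Spec}(V(z))$ inside $\overline{\Gamma_{p\pi}}$; then $V(z)^{1/p}$ has spectrum in the closed upper half-plane (arguments in $[0,p\pi]$ are scaled into $[0,\pi]$), so $g(z)=\mathrm{Tr}[V(z)^{1/p}]$ is a sum of complex numbers with non-negative imaginary part, i.e.\ $g:\mathbbm{C}^+\to\overline{\mathbbm{C}^+}$.

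Finally, by the Herglotz integral representation of functions holomorphic on $\mathbbm{C}^+$ with non-negative imaginary part, since $g$ is real on the half-line $(a,\infty)$ with $a=\inf\{t:X+tY>0\}$, the representing measure of $g$ is carried by $(-\infty,a]$, whence $g''(t)=\int 2(s^2+1)(s-t)^{-3}\,d\nu(s)\le0$ for $t>a$; thus $g$ is concave on $(a,\infty)\supseteq[0,\infty)$, which is what the reduction required, and the proposition follows. The step I expect to be the real obstacle is the sectorial localization $\mathrm{Spec}(V(z))\subseteq\overline{\Gamma_{p\pi}}$: mere membership $V(z)\in I_n^+$ places the spectrum only in $\mathbbm{C}^+$, which does not suffice to control the outer exponent $1/p>1$, and it is precisely the hypothesis $Y\in\mathcal{I}(\mathcal{H})$ — combined with the two-sided bound on $(X+zY)^{p/2}$ coming from Lemma \ref{hiailemma2} and a careful use of the similarity invariance of the spectrum — that is designed to yield the finer estimate; carrying this through cleanly, and checking the boundary behaviour as $z$ tends to the real axis and at $t=a$, is where the care lies.
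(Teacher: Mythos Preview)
Your overall strategy is exactly the paper's: Epstein's complex-analytic method as in Hiai \cite{hiai2001concavity}, complexifying the line through the base point and showing the resulting trace is a Pick function, then reading off concavity from the Herglotz representation. Your reduction via degree-one homogeneity and superadditivity is a legitimate variant of the paper's direct second-derivative reduction; in fact your parametrisation $X+zY$ with $Y\in\mathcal{I}(\mathcal{H})$ and $X$ Hermitian is literally the paper's $zX+H$ under the relabelling $X\leftrightarrow Y$, $H\leftrightarrow X$, so the analytic cores coincide (the paper simply undoes the swap afterwards with the substitution $x\mapsto x\phi(x^{-1})$).

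The one place your sketch diverges from the paper is exactly the step you flag as the obstacle, the sectorial localisation $\mathrm{Spec}(V(z))\subset\Gamma_{p\pi}$. Your proposed route---sector control on the numerical range of $(X+zY)^{p/2}$ via Lemma~\ref{hiailemma2}, plus similarity invariance---does not go through cleanly: conjugation by the \emph{non-unitary} $Z$ destroys the sector bound, and cyclicity only moves factors around without restoring it. The paper instead runs a continuity argument: it checks that $\mathrm{Spec}(F(z))\subset\Gamma_{p\pi}$ for large $|z|$ along each ray, and then shows the spectrum can never touch $\partial\Gamma_{p\pi}=[0,\infty)\cup e^{ip\pi}[0,\infty)$ by factoring
\[
F(z)-r\mathbbm{1}=W\bigl(A-rW^{-2}\bigr)W,\qquad W=\Phi_Z\bigl((zX+H)^{p/2}\bigr),
\]
and observing that the $\mathcal{I}(\mathcal{H})$ hypothesis (applied with $B=ZZ^\ast$ and then inverted via Lemma~\ref{hiailemma1}) forces $W^{-2}\in I_n^-$, hence $A-rW^{-2}\in I_n^+$ and the product is invertible; the ray $re^{ip\pi}$ is handled the same way after multiplying through by $e^{-ip\pi}$ and invoking Lemma~\ref{hiailemma2}. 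So your use of $\mathcal{I}(\mathcal{H})$ with $B=ZAZ^\ast$ to get $V(z)\in I_n^+$ is only half the story; the finer estimate needs a \emph{second} application with $B=ZZ^\ast$ inside the factorisation above, not a numerical-range argument.
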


\begin{proof}[Proof of Proposition \ref{newPick}]
It suffices to show that for any $H$ hermitian,
\begin{equation}\label{2derivative}
    \frac{d^2}{dx^2}\left(\mathrm{Tr}\left[\left\{\Phi_Z\left((X+xH)^{\frac{p}{2}}\right)A \,\Phi_Z\left((X+xH)^{\frac{p}{2}}\right)\right\}^{\frac{1}{p}}\right]\right) \leq 0,
\end{equation}
for any small $x>0$. This is because if (\ref{2derivative}) holds, then
$$(X + xH)\mapsto \mathrm{Tr}\left[\left\{\Phi_Z\left((X+xH)^{\frac{p}{2}}\right)A \,\Phi_Z\left((X+xH)^{\frac{p}{2}}\right)\right\}^{\frac{1}{p}}\right] $$ is concave, and hence the proposition is proved by taking $x\to 0$. 

Observe that for any $z\in\mathbbm{C}^+$, it follows that $zX+H\in I_n^+$. So $\left(zX+H\right)^{\frac{p}{2}}$ is well defined, by the analytic functional calculus. Moreover, by Lemma \ref{hiailemma2}, observe that $\left(zX+H\right)^{\frac{p}{2}}\in I^+_n$. By linearity of $\Phi_Z$ and by Lemma \ref{pos.conj}, $\Phi_Z\left((zX+H)^{\frac{p}{2}}\right)\in I_n^+$. Define
\begin{equation}\label{bigF}
    F(z) := \Phi_Z\left((zX+H)^{\frac{p}{2}}\right)A \,\Phi_Z\left((zX+H)^{\frac{p}{2}}\right). 
\end{equation} This function is analytic in $\mathbbm{C}^+$ because a product of analytic functions is analytic. 

To prove equation (\ref{2derivative}), we prove the following steps:
\begin{enumerate}
    \item Show $\mathrm{Spec}(F(z))$ is contained in some open subset of $\mathbbm{C}$ so that $(F(z))^{\frac{1}{p}}$ is well defined.
    \item Extend $F(z)$ onto the real line $(R,\infty)$.
    \item Express $\mathrm{Tr}\left[(F(z))^{\frac{1}{p}}\right]$ as a Pick function to admit an integral representation.
    \item Show concavity of the above expression through its integral representation.
\end{enumerate}

\underline{Step 1:} We show $\mathrm{Spec}(F(z)) \subset \Gamma_{p\pi}$. As in \cite{hiai2001concavity}, it suffices to prove the following 3 properties:
\begin{enumerate}
    \item $\mathrm{Spec}(F(z))\subset\Gamma_{p\pi}$, when $z=re^{i\theta}$ with fixed $0<\theta<\pi$ and sufficiently large $r>0$.
    \item $\mathrm{Spec}(F(z))\cap[0,\infty)=\emptyset$ for all $z\in\mathbbm{C}^+$.
    \item $\mathrm{Spec}(F(z))\cap \{re^{ip\pi}\}=\emptyset$ for all $z\in\mathbbm{C}^+$.
\end{enumerate}
These statements are sufficient because if, for the sake of contradiction, $\mathrm{Spec}(F(z)) \not\subset \Gamma_{p\pi}$ for some $z_o=r_oe^{i\theta_o}\in\mathbbm{C}^+$ and the 3 properties are satisfied, then by continuity of eigenvalues of $F(z)$ and by property $(1)$, $\mathrm{Spec}(F(z))\cup \partial\Gamma_{p\pi}\neq\emptyset$, for some \newline $z\in\{re^{i\theta_o} : r>r_o\}$, which implies either property (2) or property (3) is violated. 

To prove property $(1)$, by linearity of $z$ in $F(z)$,
$$F(z) = z^p\left[\Phi_Z\left((X+z^{-1}H)^{\frac{p}{2}}\right) A \, \Phi_Z\left((X+z^{-1}H)^{\frac{p}{2}}\right)\right]. $$ 
Whenever $z=re^{i\theta_o}$ with a fixed $0<\theta_o<\pi$, note that $$\mathrm{Spec}\left[\Phi_Z\left((X+z^{-1}H)^{\frac{p}{2}}\right) A \, \Phi_Z\left((X+z^{-1}H)^{\frac{p}{2}}\right)\right]$$ converges to $$\mathrm{Spec}\left[\Phi_Z\left((X)^{\frac{p}{2}}\right) A \, \Phi_Z\left((X)^{\frac{p}{2}}\right)\right]\subset(0,\infty), \text{ as }r\to\infty.$$ 

To prove property $(2)$, for any $0\leq r<\infty$, \begin{equation}\label{eq: F(z)-id}
    F(z)-r\mathbbm{1}_n=\Phi_Z\left((zX+H)^{\frac{p}{2}}\right) (A - r\Phi_Z((zX+H)^{p/2})^{-2})\, \Phi_Z\left((zX+H)^{\frac{p}{2}}\right).
\end{equation} By the assumption of $X$ in any convex subset of $\mathcal{I}(\cH)$ and by Lemma \ref{hiailemma1}, $$(zX+H)^{-\frac{p}{2}}(Z^{-1})^*(Z^{-1})(zX+H)^{-\frac{p}{2}}\in I^-_n,$$ and by Lemma \ref{conjugation},
$$\Phi_{(Z^{-1})^*}\left[(zX+H)^{-\frac{p}{2}}(Z^{-1})^*(Z^{-1})(zX+H)^{-\frac{p}{2}}\right]\in I^-_n .$$ Note that $\left[\Phi_Z\left((zX+H)^{p/2}\right)\right]^{-2} = \Phi_{(Z^{-1})^*}\left[(zX+H)^{-\frac{p}{2}}(Z^{-1})^*(Z^{-1})(zX+H)^{-\frac{p}{2}}\right] \in I^-_n $.
This implies that $A-r\left[\Phi_Z\left((zX+H)^{p/2}\right)\right]^{-2}\in I^+_n$,
and thus $F(z)-r\mathbbm{1}_n$ is invertible by Lemma \ref{hiailemma1}. If there exists $l\in \mathrm{Spec}(F(z))\cap[0,\infty)$, then  $F(z)-l\mathbbm{1}_n=0$, which would contradict invertability for all $r\in[0, \infty)$. Hence property $(2)$ is satisfied. Proving property $(3)$ is very similar to proving the second property using the assumption from  $X$ in any convex subset of $\mathcal{I}(\cH)$ as well as  Lemma \ref{hiailemma2}. With this and using the analytic functional calculus,  $(F(z))^{\frac{1}{p}}$ is well defined. 

\underline{Step 2:} For every $z\in\mathbbm{C}^+$ such that $|z|>R$, define an analytic function
\begin{equation}\label{new.bigF}
    \hat{F}(z):= z^p\Phi_Z\left((X+z^{-1}H)^{\frac{p}{2}}\right) A\,\Phi_Z\left((X+z^{-1}H)^{\frac{p}{2}}\right).
\end{equation}
Then continuously extending this function to the real line, for every $x\in(R,\infty)$ gives $\hat{F}(x)=F(x)$. Hence for every $z\in\mathbbm{C}^+$ such that $|z|>R$, write 
$$F(z)= z^p\Phi\left((X+z^{-1}H)^{\frac{p}{2}}\right)A \,\Phi\left((X+z^{-1}H)^{\frac{p}{2}}\right),$$ and
$$(F(z))^{\frac{1}{p}}= z\left\{\Phi_Z\left((X+z^{-1}H)^{\frac{p}{2}}\right)A \,\Phi_Z\left((X+z^{-1}H)^{\frac{p}{2}}\right)\right\}^{\frac{1}{p}}.$$

\underline{Step 3:} Given that $\mathrm{Tr}\left[(F(z))^{\frac{1}{p}}\right]\in\mathbbm{C}^+$ for every $z\in\mathbbm{C}^+$,  and $\mathrm{Tr}\left[(F(x))^{\frac{1}{p}}\right]\in\mathbbm{R}$ for every $x\in(R,\infty)$, by the Schwarz Reflection Principle, $(F(z))^{\frac{1}{p}}$ can be extended to the lower half plane, that is the set of complex number with negative imaginary parts. And thus, $\phi$ is a Pick function on $\mathbbm{C}\setminus(-\infty, R)$, where $\phi(x) = \mathrm{Tr}\left[(F(x))^{\frac{1}{p}}\right]$ for all $x\in(R, \infty)$. Then
\begin{equation}
    x\phi(x^{-1}) = \mathrm{Tr}\left\{\Phi_Z\left((X+xH)^{\frac{p}{2}}\right)A \,\Phi_Z\left((X+xH)^{\frac{p}{2}}\right)\right\}^{\frac{1}{p}},
\end{equation}
for every $x\in(0, R^{-1})$. By theory of Pick functions, see \cite{bhatia2013matrix}, every Pick function $\phi$ admits an integral representation. 
\begin{equation}
    \phi(z) = a + bz + \int_{-\infty}^\infty \frac{1+tz}{t-z}d\nu(t),
\end{equation}
where $a\in\mathbbm{R}$, $b\geq0$, and $\nu$ is a finite measure on $\mathbbm{R}$. The measure $\nu$ is supported in $(-\infty, R]$ because $\phi$ is analytically continued across $(R, \infty)$. 

\underline{Step 4:} For all $x\in(0,R^{-1})$, 
\begin{align}
x\phi(x^{-1}) & = x\left(a + \frac{b}{x} + \int_{-\infty}^\infty \frac{1+tx^{-1}}{t-x^{-1}}d\nu(t)\right) \\ 
& = ax + b + \int_{-\infty}^R \frac{x(x+t)}{tx-1}d\nu(t),
\end{align}
with 
\begin{align}
    \frac{d^2}{dx^2}\left(ax + b + \int_{-\infty}^R \frac{x(x+t)}{tx-1}d\nu(t)\right) & = \int_{-\infty}^R\frac{d^2}{dx^2}\left( \frac{x(x+t)}{tx-1} \right) d\nu(t) \\
    & = \int_{-\infty}^R \left(\frac{2(t^2+1)}{(xt-1)^3}\right) d\nu(t) <0,
\end{align}
for all $x\in(0,R^{-1})$ and all $t\in(-\infty, R)$.  
\end{proof}

\subsection{Convex and Concave Trace Functionals}




To prove joint convexity of $f_{\alpha, z}(H,\rho, \sigma)$ from equation (\ref{f1}) below, the next few results are needed. Recall the equation from (\ref{functional})
\begin{equation}\label{functional1}
(A,B)\mapsto\mathrm{Tr}(B^{\frac{q}{2}}K^*A^pKB^{\frac{q}{2}})^s.
\end{equation}

\begin{theorem}\cite[Theorem 1.1]{CFLconjecture}\label{Reduced1}
Fix any invertible matrix $\cK$. Suppose that $p\geq q$ and $s>0$. 
\begin{enumerate}
    \item If $0\leq q\leq p\leq 1$ and $0<s\leq\frac{1}{p+q}$, then the map from (\ref{functional1}) is jointly concave.
    \item If $-1\leq q\leq p\leq 0$ and $s>0$, then the map from (\ref{functional1}) is jointly convex. 
    \item If $-1\leq q \leq 0$, $1\leq p\leq2$, $(p,q)\neq (1,-1)$ and $s\geq\frac{1}{p+q}$, then the map from (\ref{functional1}) is jointly convex.
\end{enumerate}
\end{theorem}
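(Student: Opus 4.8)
I would prove this following the two-stage strategy of \cite{CFLconjecture}. Stage one reduces the exponent $s$ to the borderline value $s_{0}:=\frac{1}{p+q}$ associated with $p$ and $q$; stage two establishes joint concavity (item (1)) or joint convexity (items (2) and (3)) at $s=s_{0}$. In item (1) the value $s_{0}$ is the \emph{largest} admissible exponent, and the remaining range $0<s<s_{0}$ is recovered from the borderline case via a variational formula for $\mathrm{Tr}(\,\cdot\,)^{s}$ (writing $x^{s}$, for $0<s<1$, as an infimum of affine functions of $x$, so that $\mathrm{Tr}\,G^{s}$ becomes an infimum of functionals already known to be concave once the $s_{0}$--case is settled). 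In items (2) and (3) the borderline sits at the \emph{bottom} of the admissible range, $s\ge s_{0}$, and larger exponents are treated in the dual way (using that $x\mapsto x^{s/s_{0}}$ is convex and increasing). This scheme is also the origin of the exclusion $(p,q)=(1,-1)$ in item (3): there $p+q=0$, so there is no finite borderline exponent and the reduction breaks down --- consistently with the fact that joint convexity genuinely fails at $(1,-1)$.

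At $s=s_{0}$ the functional in (\ref{functional1}) is $\bigl\|\,A^{p/2}\cK B^{q/2}\,\bigr\|_{2s_{0}}^{2s_{0}}$, and the plan is to obtain its joint concavity/convexity from the one-variable prototype of Proposition \ref{newPick}: after absorbing one of the two variables into the conjugation, say writing $B^{q/2}\cK^{*}A^{p}\cK B^{q/2}=\Phi_{Z}(A^{p/2})\,\mathcal A\,\Phi_{Z}(A^{p/2})$ for a suitable invertible $Z$ and $\mathcal A\in\cP(\cH)$, the dependence on $A$ alone is exactly of the form handled there. Items (1) and (2) can, alternatively, be read off from the classical Lieb concavity theorem and Ando's convexity theorem together with their variational corollaries. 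Item (3), with $1\le p\le 2$ in the operator-convex-but-not-monotone regime and $q<0$ contributing an inverse power, is the genuinely new case: here one reruns the Epstein--Hiai argument of Proposition \ref{newPick}, analytically continuing the operator function $F(z)=\Phi_{Z}\bigl((zX+H)^{p/2}\bigr)\mathcal A\,\Phi_{Z}\bigl((zX+H)^{p/2}\bigr)$ into $\mathbbm{C}^{+}$, checking $\mathrm{Spec}(F(z))\subset\Gamma_{p\pi}$ so that $F(z)^{1/p}$ is well defined, passing to a Pick-type function and its integral representation, and arranging matters so that the second derivative comes out with the sign giving convexity rather than concavity. To promote the resulting one-variable statement to genuine joint concavity/convexity in $(A,B)$ one continues \emph{both} operators along a single complex parameter (the device the Epstein method is built for) or uses the usual block-diagonal doubling argument.

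The step I expect to be the main obstacle is the spectral-containment part of stage two in item (3). For $1<p\le 2$ and $-1\le q<0$ the matrices $A^{p}$ and $B^{q}$ no longer carry the operator monotonicity that made the $0<p\le 1$ analysis in Proposition \ref{newPick} run smoothly, so re-establishing the three sub-properties of its Step~1 (spectrum in $\Gamma_{p\pi}$ for large $|z|$; disjointness from $[0,\infty)$; disjointness from the ray $\{re^{ip\pi}\}$) with the signs appropriate to convexity forces one to use the full strength of the hypothesis ``$X$ in a convex subset of $\mathcal I(\mathcal H)$'' from Remark \ref{remark: newset}, and to verify that the operators produced by the absorption step actually meet it. A secondary difficulty is keeping the exponent-reduction lemma uniform across the change of sign of $p+q$, so that items (2) (where $p+q\le 0$) and (3) are both covered without the argument silently degenerating near $(1,-1)$.
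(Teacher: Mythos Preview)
The paper does not prove Theorem~\ref{Reduced1} at all: it is quoted verbatim from \cite[Theorem~1.1]{CFLconjecture} as an external input, with no argument supplied here. So there is no ``paper's own proof'' to compare your proposal against. In fact, inspecting the rest of Section~\ref{sec: prelims} and Section~\ref{sec: main}, Theorem~\ref{Reduced1} is stated only for context and is never invoked in any of the paper's own arguments; the proof of Proposition~\ref{jconvexity} relies instead on Theorem~\ref{zhangresult}, Proposition~\ref{fixedB}, and Proposition~\ref{newPick}.

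Your sketch is a reasonable high-level reconstruction of the strategy in \cite{CFLconjecture} (reduction to the borderline exponent $s_{0}=1/(p+q)$ via a variational formula, then an Epstein--Hiai complex-analytic argument at $s_{0}$). However, you conflate two distinct pieces of machinery. Proposition~\ref{newPick} and the set $\mathcal{I}(\mathcal{H})$ of Remark~\ref{remark: newset} are \emph{this paper's} constructions, introduced precisely because the author needs a concavity statement (Proposition~\ref{jconvexity}) that is not simply an instance of Theorem~\ref{Reduced1}; they are not part of the proof of Theorem~\ref{Reduced1} in \cite{CFLconjecture}, and invoking them to prove that theorem would be circular in spirit and anachronistic in the paper's logical order. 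The original proof in \cite{CFLconjecture} works on all of $\mathcal{P}(\mathcal{H})$ without any auxiliary restriction of the $\mathcal{I}(\mathcal{H})$ type, so the ``main obstacle'' you identify --- verifying membership in $\mathcal{I}(\mathcal{H})$ --- is an artifact of importing the wrong toolkit rather than a genuine feature of the proof.
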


\begin{theorem}\cite[Theorem 3.3]{CFLconjecture}\label{zhangresult}
For $r_i>0$, $i\in\{0,1,2\}$ such that $\frac{1}{r_0}=\frac{1}{r_1}+\frac{1}{r_2}$, one has that for any invertible $X,Y\in\cB(\cH)$ that
\begin{equation}
    \mathrm{Tr}|XY|^{r_1} = \max\left\{ \frac{r_1}{r_0} \mathrm{Tr}|XZ|^{r_0}-\frac{r_1}{r_2}\mathrm{Tr}|Y^{-1}Z|^{r_2} : Z\in\cB(\cH) \mbox{ and invertible} \right\}.
\end{equation}

\end{theorem}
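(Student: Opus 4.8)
The plan is to prove the two inequalities ``$\ge$'' and ``$\le$'' between $\mathrm{Tr}|XY|^{r_1}$ and the bracketed quantity, the first by exhibiting an explicit maximizer and the second by a H{\"o}lder-plus-Young argument valid for every invertible $Z$; together these show the supremum is attained and equals $\mathrm{Tr}|XY|^{r_1}$, so the maximum is as claimed.

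For the bound ``$\ge$'', fix an invertible $Z$ and use the factorization $XZ = (XY)(Y^{-1}Z)$. The Schatten H{\"o}lder inequality with $\tfrac1{r_0} = \tfrac1{r_1}+\tfrac1{r_2}$ gives $\|XZ\|_{r_0}\le\|XY\|_{r_1}\,\|Y^{-1}Z\|_{r_2}$, i.e.\ after raising to the power $r_0$,
\[ \mathrm{Tr}|XZ|^{r_0}\ \le\ a^{\,r_0/r_1}\,c^{\,r_0/r_2},\qquad a:=\mathrm{Tr}|XY|^{r_1},\quad c:=\mathrm{Tr}|Y^{-1}Z|^{r_2}. \]
The exponents $\lambda:=r_0/r_1$ and $\mu:=r_0/r_2$ satisfy $\lambda+\mu=1$ precisely because of the harmonic relation among the $r_i$, so weighted AM--GM (Young's inequality) yields $a^\lambda c^\mu\le\lambda a+\mu c$. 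Since $r_1/r_0=1/\lambda$ and $r_1/r_2=\mu/\lambda$, a short rearrangement gives
\[ \frac{r_1}{r_0}\,\mathrm{Tr}|XZ|^{r_0}-\frac{r_1}{r_2}\,\mathrm{Tr}|Y^{-1}Z|^{r_2}\ \le\ \tfrac1\lambda\bigl(a^\lambda c^\mu-\mu c\bigr)\ \le\ a\ =\ \mathrm{Tr}|XY|^{r_1}, \]
which is the desired upper bound on the objective for every invertible $Z$.

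For attainment I would take $Z:=Y\,|XY|^{r_1/r_2}$, which is invertible because $X,Y$ are and $|XY|$ is positive definite. Then $Y^{-1}Z=|XY|^{r_1/r_2}\ge 0$, so $\mathrm{Tr}|Y^{-1}Z|^{r_2}=\mathrm{Tr}|XY|^{r_1}=a$; and writing $XY=U|XY|$ in polar form with $U$ unitary, $XZ=U|XY|^{1+r_1/r_2}=U|XY|^{(r_1+r_2)/r_2}$, hence $\mathrm{Tr}|XZ|^{r_0}=\mathrm{Tr}|XY|^{\,r_0(r_1+r_2)/r_2}=\mathrm{Tr}|XY|^{r_1}=a$, using the identity $r_0(r_1+r_2)/r_2=r_1$ (again the harmonic relation). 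Substituting, the objective equals $\tfrac{r_1}{r_0}a-\tfrac{r_1}{r_2}a=r_1\bigl(\tfrac1{r_0}-\tfrac1{r_2}\bigr)a=a$, so this $Z$ saturates the upper bound; equivalently it is the common equality case of the H{\"o}lder step and of the Young step ($a=c$).

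Every computation here is elementary exponent bookkeeping; the two pieces of genuine content are recognizing the factorization $XZ=(XY)(Y^{-1}Z)$ that lets H{\"o}lder do the work, and, on the attainment side, producing the optimizer $Z=Y|XY|^{r_1/r_2}$. I expect the latter to be the step needing the most care: one must confirm it is invertible and that it simultaneously saturates both the H{\"o}lder inequality and Young's inequality, though once the candidate is written down the verification reduces to the single exponent identity $r_0(r_1+r_2)/r_2=r_1$.
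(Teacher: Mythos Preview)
The paper does not supply its own proof of this statement; it is quoted verbatim from \cite{CFLconjecture} (Theorem~3.3 there) and used as a black box in the proof of Proposition~\ref{jconvexity}. So there is nothing in the paper to compare your argument against.

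That said, your proof is correct. The H\"older step $\|XZ\|_{r_0}\le\|XY\|_{r_1}\|Y^{-1}Z\|_{r_2}$ via the factorization $XZ=(XY)(Y^{-1}Z)$, followed by weighted AM--GM with weights $\lambda=r_0/r_1$ and $\mu=r_0/r_2$, gives the uniform upper bound $a$ on the objective; the candidate $Z=Y|XY|^{r_1/r_2}$ is invertible (since $XY$ is invertible, $|XY|$ is positive definite) and the exponent identity $r_0(r_1+r_2)/r_2=r_1$ makes both inequalities tight. This is exactly the mechanism behind such variational formulas, and indeed the paper itself runs the same H\"older--Young scheme in the proof of Lemma~\ref{function2} to obtain the closely related identity $\Psi_{\alpha,z}(\rho\|\sigma)=\sup_{H>0}f_{\alpha,z}(H,\rho,\sigma)$.
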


\begin{proposition}\cite[Proposition 5]{summaryPaper}\label{fixedB}
For a fixed operator $B$, the map on positive operators 
$$A\mapsto \mathrm{Tr}\left[(B^*A^pB)^{\frac{1}{p}}\right] $$ \begin{enumerate}
    \item is concave for $0\leq p\leq 1$, with $p\neq0$.
    \item is convex for $1\leq p\leq 2$, with $p\neq0$.
\end{enumerate}
\end{proposition}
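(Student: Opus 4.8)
The plan is to obtain Proposition~\ref{fixedB} as a direct specialization of Theorem~\ref{Reduced1}. In the map from~(\ref{functional1}) I would set $q=0$, $\mathcal{K}=B$, keep the given $p$, and take $s=\frac{1}{p}$. Since $B^{q/2}=B^{0}=\mathbbm{1}$, the two-variable functional $(A,B')\mapsto\mathrm{Tr}\!\left(B'^{q/2}\mathcal{K}^{*}A^{p}\mathcal{K}B'^{q/2}\right)^{s}$ collapses to $A\mapsto\mathrm{Tr}\!\left(B^{*}A^{p}B\right)^{1/p}$, which does not depend on the second slot. Consequently, joint concavity (resp.\ joint convexity) of~(\ref{functional1}) is precisely concavity (resp.\ convexity) in $A$ of the map in the statement, so the whole claim reduces to verifying the hypotheses of Theorem~\ref{Reduced1} in each regime.

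Next I would check those hypotheses. In both cases $p\geq q=0$ and $s=\frac1p>0$, and $\frac{1}{p+q}=\frac1p$. For $0<p\leq1$: one has $0\leq q\leq p\leq1$ and $0<s=\frac1p=\frac1{p+q}$, so $0<s\leq\frac1{p+q}$, and case~(1) of Theorem~\ref{Reduced1} gives concavity. For $1\leq p\leq2$: one has $-1\leq q=0\leq0$, $1\leq p\leq2$, $(p,q)=(p,0)\neq(1,-1)$, and $s=\frac1p\geq\frac1{p+q}$, so case~(3) of Theorem~\ref{Reduced1} gives convexity. (The value $p=1$ is covered by either case; there the functional is $\mathrm{Tr}(B^{*}AB)$, which is linear in $A$ and hence both concave and convex, consistently.)

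The one point that needs a little care is that Theorem~\ref{Reduced1} assumes $\mathcal{K}=B$ invertible, whereas the proposition allows an arbitrary fixed $B$. I would close this by continuity: pick invertible operators $B_{n}\to B$ (for instance $B_{n}=B+\tfrac1n\mathbbm{1}$ along the cofinite set of $n$ for which this is invertible). For each $n$ the map $A\mapsto\mathrm{Tr}\!\left(B_{n}^{*}A^{p}B_{n}\right)^{1/p}$ is concave (resp.\ convex) by the argument above. For fixed $A\in\cP(\cH)$ we have $B_{n}^{*}A^{p}B_{n}\to B^{*}A^{p}B$ in norm, and since $X\mapsto\mathrm{Tr}(X^{1/p})$ is continuous on positive semidefinite matrices (eigenvalues depend continuously on $X$ and $\lambda\mapsto\lambda^{1/p}$ is continuous on $[0,\infty)$), the values converge; a pointwise limit of concave (resp.\ convex) functions is concave (resp.\ convex), completing the proof.

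I do not anticipate a real obstacle with this route, because all the analytic difficulty is absorbed into Theorem~\ref{Reduced1}. The only genuinely delicate content would surface if that theorem were unavailable: then the concave case $0<p\leq1$ would have to be extracted from Epstein's concavity theorem for $A\mapsto\mathrm{Tr}\!\left(\mathcal{K}A^{p}\mathcal{K}^{*}\right)^{1/p}$, and the convex case $1\leq p\leq2$ from the Carlen--Lieb/Hiai-type convexity results, and reproving either of these is the hard step that Theorem~\ref{Reduced1} packages away.
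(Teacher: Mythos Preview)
Your reduction is correct: setting $q=0$, $K=B$, $s=\tfrac1p$ in the functional~(\ref{functional1}) collapses it to $A\mapsto\mathrm{Tr}\bigl[(B^{*}A^{p}B)^{1/p}\bigr]$, and the parameter checks against cases~(1) and~(3) of Theorem~\ref{Reduced1} go through exactly as you wrote them; the continuity patch for non-invertible $B$ is also fine.

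There is, however, nothing to compare against in this paper: Proposition~\ref{fixedB} is quoted from \cite[Proposition~5]{summaryPaper} and is not re-proved here. So your proposal is not competing with the paper's argument but rather supplying one where the paper simply cites the result. One remark worth making is that your route inverts the logical order of the literature: Theorem~\ref{Reduced1} is the later, stronger statement (from \cite{CFLconjecture}), and Proposition~\ref{fixedB} (originally Epstein's theorem for the concave range and a Carlen--Lieb type result for the convex range) is one of the ingredients that motivated and in some treatments feeds into it. As a formal derivation within the present paper your argument is perfectly valid, since both results are imported as black boxes; just be aware that if you were asked to prove Proposition~\ref{fixedB} from scratch, invoking Theorem~\ref{Reduced1} could be circular depending on how that theorem is established.
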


\begin{proposition}\label{sum}
If $f:\cD(\cH)\times\cD(\cH)\times\cP(\cH)\mapsto[0,\infty)$ is defined as 
$$f(A,B,H) := g(A,H) + h(B,H), $$
where $g$ and $h$ are continuous, the functional $g$ is convex in $A$ and the functional $h$ is convex in $B$, then $f$ is jointly convex in $(A,B)$. Moreover, $\sup_{H>0}\{f(A,B,H)\}$ is jointly convex in $(A,B)$ whenever $f$ is.  
\end{proposition}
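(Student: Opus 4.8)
The plan is to reduce everything to two elementary facts: that a function of the pair $(A,B)$ which depends on only one of its two arguments is jointly convex as soon as it is convex in that argument, and that a pointwise supremum of a family of convex functions is convex.

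First I would fix $H\in\cP(\cH)$ and verify joint convexity of $(A,B)\mapsto f(A,B,H)$ directly from the definition. Given $(A_0,B_0),(A_1,B_1)\in\cD(\cH)\times\cD(\cH)$ and $\lambda\in[0,1]$, observe that $\cD(\cH)$ is convex, so $A_\lambda:=\lambda A_1+(1-\lambda)A_0$ and $B_\lambda:=\lambda B_1+(1-\lambda)B_0$ again lie in $\cD(\cH)$. Since $g(\cdot,H)$ is convex on $\cD(\cH)$, we have $g(A_\lambda,H)\leq\lambda g(A_1,H)+(1-\lambda)g(A_0,H)$, and since $h(\cdot,H)$ is convex on $\cD(\cH)$, $h(B_\lambda,H)\leq\lambda h(B_1,H)+(1-\lambda)h(B_0,H)$. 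Adding these two inequalities and using $f=g+h$ gives $f(A_\lambda,B_\lambda,H)\leq\lambda f(A_1,B_1,H)+(1-\lambda)f(A_0,B_0,H)$, which is exactly joint convexity in $(A,B)$.

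For the second assertion I would set $\Phi(A,B):=\sup_{H>0}f(A,B,H)$, allowing a priori the value $+\infty$, and invoke the hypothesis that $f(\cdot,\cdot,H)$ is jointly convex for each $H$ (which the first part establishes in the case $f=g+h$, but the argument only uses convexity). With $A_\lambda,B_\lambda$ as above, for every fixed $H>0$ joint convexity of $f(\cdot,\cdot,H)$ yields $f(A_\lambda,B_\lambda,H)\leq\lambda f(A_1,B_1,H)+(1-\lambda)f(A_0,B_0,H)\leq\lambda\Phi(A_1,B_1)+(1-\lambda)\Phi(A_0,B_0)$. Taking the supremum over $H>0$ of the left-hand side gives $\Phi(A_\lambda,B_\lambda)\leq\lambda\Phi(A_1,B_1)+(1-\lambda)\Phi(A_0,B_0)$, i.e.\ $\Phi$ is jointly convex; equivalently, this is the standard inequality $\sup_H(u_H+v_H)\leq\sup_H u_H+\sup_H v_H$ applied to the convex-combination bound.

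There is no genuine obstacle here: the content of the proposition is that convexity of a summand in $A$ alone (respectively in $B$ alone) already forces joint convexity of that summand in $(A,B)$, after which sums and suprema of convex functions are convex by inspection. The only points worth a remark are that $\cD(\cH)$ is convex, so convex combinations remain admissible inputs, and that the supremum over $H$ may in principle be $+\infty$, which does not affect the convexity inequality under the usual conventions for extended-real-valued functions; the continuity hypothesis on $g$ and $h$ is not actually needed for the convexity conclusion and serves only to keep the quantities involved well-behaved.
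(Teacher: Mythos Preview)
Your proof is correct and follows essentially the same approach as the paper: fix $H$, apply separate convexity of $g$ in $A$ and $h$ in $B$, and add the inequalities. The paper uses general convex combinations $\sum_i\lambda_i$ while you use two points and $\lambda\in[0,1]$, which is of course equivalent; you also spell out the supremum argument (which the paper leaves implicit) and correctly note that the continuity hypothesis is not actually used.
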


\begin{proof}
For all $i$ such that $0\leq\lambda_i\leq1$, with $\sum_i\lambda_i=1$, by convexity of $g$ and $h$, 
$$g\left(\sum_i\lambda_iA_i, H\right) \leq \sum_i\lambda_ig(A_i, H) \text{ and } h\left(\sum_i\lambda_iB_i, H\right) \leq \sum_i\lambda_ih(B_i, H). $$ 
Thus
\begin{align}
    f\left(\sum_i\lambda_iA_i,\sum_i\lambda_iB_i,H \right) & = g\left(\sum_i\lambda_iA_i, H\right) + h\left(\sum_i\lambda_iB_i, H\right) \\
    & \leq \sum_i\lambda_ig(A_i, H) + \sum_i\lambda_ih(B_i, H) \\
    & = \sum_i\lambda_i\left(g(A_i, H) + h(B_i, H)\right) \\
    & = \sum_i\lambda_if(A_i, B_i, H),
\end{align}
as desired. 
\end{proof}



  Recall that $$\Psi_{\alpha,z}(\rho||\sigma):=\mathrm{Tr}\left[\left(\sigma^{\frac{1-\alpha}{2z}}\rho^{\frac{\alpha}{z}}\sigma^{\frac{1-\alpha}{2z}}\right)^z\right]. $$

\begin{lemma}\label{function2}
Let $\rho$ and $\sigma\in\cP(\cH)$, and assume that $\alpha>1$ and $z>1$. For any positive operator $H$, define
\begin{equation}\label{f1}
     f_{\alpha,z}(H,\rho,\sigma) :=      z\mathrm{Tr}(\sigma^{\frac{z-\alpha}{2z}}\rho^{\frac{\alpha}{z}}\sigma^{\frac{z-\alpha}{2z}}H) - (z-1)\mathrm{Tr}\left[(\sigma^{\frac{z-1}{2z}}H\sigma^{\frac{z-1}{2z}})^{\frac{z}{z-1}}\right].
\end{equation}
Then
\begin{equation}
    \Psi_{\alpha,z}(\rho||\sigma)=\sup_{H>0} f_{\alpha,z,H}(\rho,\sigma),
\end{equation}
where the supremum is achieved whenever $H = \sigma^{\frac{1-z}{2z}}(\sigma^{\frac{1-\alpha}{2z}}\rho^{\frac{\alpha}{z}}\sigma^{\frac{1-\alpha}{2z}})^{z-1}\sigma^{\frac{1-z}{2z}}$. 

\end{lemma}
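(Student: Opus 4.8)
The plan is to recognize $f_{\alpha,z}(H,\rho,\sigma)$ as a Legendre/Young-type variational expression for the trace power $\mathrm{Tr}(A^{z})$ with $A:=\sigma^{\frac{1-\alpha}{2z}}\rho^{\frac{\alpha}{z}}\sigma^{\frac{1-\alpha}{2z}}$, after absorbing a power of $\sigma$ into the optimization variable $H$. The key scalar/trace identity I would establish first is: for any $A\in\cP(\cH)$ and any $s>1$,
$$\mathrm{Tr}(A^{s})=\sup_{\tilde H>0}\Big\{\, s\,\mathrm{Tr}(A\tilde H)-(s-1)\,\mathrm{Tr}\big(\tilde H^{\frac{s}{s-1}}\big)\Big\},$$
with the supremum attained precisely at $\tilde H=A^{s-1}$. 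The upper bound $s\,\mathrm{Tr}(A\tilde H)-(s-1)\,\mathrm{Tr}(\tilde H^{s/(s-1)})\le\mathrm{Tr}(A^{s})$ follows from Hölder's inequality for Schatten norms, $\mathrm{Tr}(A\tilde H)\le\|A\|_{s}\,\|\tilde H\|_{s/(s-1)}$, combined with the scalar Young inequality $uv\le \frac{u^{s}}{s}+\frac{s-1}{s}v^{s/(s-1)}$ applied to $u=\|A\|_{s}$, $v=\|\tilde H\|_{s/(s-1)}$. Attainment at $\tilde H=A^{s-1}$ is a one-line check: $s\,\mathrm{Tr}(A\cdot A^{s-1})-(s-1)\,\mathrm{Tr}\big((A^{s-1})^{s/(s-1)}\big)=s\,\mathrm{Tr}(A^{s})-(s-1)\,\mathrm{Tr}(A^{s})=\mathrm{Tr}(A^{s})$.

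Next I would carry out the algebraic reduction. Since $\frac{z-\alpha}{2z}=\frac{z-1}{2z}+\frac{1-\alpha}{2z}$, we have $\sigma^{\frac{z-\alpha}{2z}}\rho^{\frac{\alpha}{z}}\sigma^{\frac{z-\alpha}{2z}}=\sigma^{\frac{z-1}{2z}}A\,\sigma^{\frac{z-1}{2z}}$, so by cyclicity of the trace the first term of $f_{\alpha,z}$ becomes $z\,\mathrm{Tr}\big(A\,\sigma^{\frac{z-1}{2z}}H\sigma^{\frac{z-1}{2z}}\big)$, while the second term is exactly $(z-1)\,\mathrm{Tr}\big[(\sigma^{\frac{z-1}{2z}}H\sigma^{\frac{z-1}{2z}})^{z/(z-1)}\big]$. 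Setting $\tilde H:=\sigma^{\frac{z-1}{2z}}H\sigma^{\frac{z-1}{2z}}$ therefore gives $f_{\alpha,z}(H,\rho,\sigma)=z\,\mathrm{Tr}(A\tilde H)-(z-1)\,\mathrm{Tr}(\tilde H^{z/(z-1)})$. Because $\sigma$ is positive and invertible, $H\mapsto\tilde H$ is a bijection of $\cP(\cH)$ onto itself, so the supremum over $H>0$ coincides with the supremum over $\tilde H>0$.

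To conclude, I would apply the variational identity with $s=z>1$ — legitimate since $\alpha>1$, $z>1$, and $A\in\cP(\cH)$ (a congruence of the positive operator $\rho^{\alpha/z}$) — obtaining $\sup_{H>0}f_{\alpha,z}(H,\rho,\sigma)=\mathrm{Tr}(A^{z})=\Psi_{\alpha,z}(\rho\|\sigma)$. The supremum is attained exactly when $\tilde H=A^{z-1}$, i.e.\ $\sigma^{\frac{z-1}{2z}}H\sigma^{\frac{z-1}{2z}}=A^{z-1}$, equivalently $H=\sigma^{\frac{1-z}{2z}}A^{z-1}\sigma^{\frac{1-z}{2z}}=\sigma^{\frac{1-z}{2z}}\big(\sigma^{\frac{1-\alpha}{2z}}\rho^{\frac{\alpha}{z}}\sigma^{\frac{1-\alpha}{2z}}\big)^{z-1}\sigma^{\frac{1-z}{2z}}$, which is the asserted optimizer.

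The only genuinely nontrivial ingredient is the trace Young inequality underlying the upper bound in the variational identity; the rest is bookkeeping with powers of $\sigma$ and cyclicity of the trace. I expect that step to be the main (and mild) obstacle. If one prefers to avoid invoking Hölder, an alternative is to note that the Euler--Lagrange equation at a critical point forces $\tilde H$ to commute with $A$, which reduces the statement to the scalar case by simultaneous diagonalization; but the Hölder/Young route is cleaner and dimension-free.
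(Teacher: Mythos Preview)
Your proposal is correct and follows essentially the same approach as the paper: both arguments combine the trace H\"older inequality with the scalar Young inequality to obtain $p\,\mathrm{Tr}(XY)-\frac{p}{q}\,\mathrm{Tr}(Y^{q})\le\mathrm{Tr}(X^{p})$ with equality iff $X^{p}=Y^{q}$, and then specialize to $X=\sigma^{\frac{1-\alpha}{2z}}\rho^{\frac{\alpha}{z}}\sigma^{\frac{1-\alpha}{2z}}$, $Y=\sigma^{\frac{z-1}{2z}}H\sigma^{\frac{z-1}{2z}}$, $p=z$. The only cosmetic difference is that you first isolate the clean variational identity $\mathrm{Tr}(A^{s})=\sup_{\tilde H>0}\{s\,\mathrm{Tr}(A\tilde H)-(s-1)\,\mathrm{Tr}(\tilde H^{s/(s-1)})\}$ and then substitute, whereas the paper carries the specific operators through from the start.
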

\begin{proof}

For $X$ and $Y$ $\in\cP(\cH)$, it follows that $\mathrm{Tr}(XY) = \mathrm{Tr}\left(X^{\frac{1}{2}}YX^{\frac{1}{2}}\right)$, which is positive because $X^{\frac{1}{2}}YX^{\frac{1}{2}}$ is a positive operator. For a choice of  $1\leq p,q\leq\infty$, such that $1=\frac{1}{p}+\frac{1}{q}$, 
\begin{align}
   0 \leq  \mathrm{Tr}(XY) \label{positive}= & |\mathrm{Tr}(XY)|  \\
   \leq & \mathrm{Tr}|XY| \label{abs}\\
   \leq &\left(\mathrm{Tr}(X^p)\right)^{\frac{1}{p}}\left(\mathrm{Tr}(Y^q)\right)^{\frac{1}{q}} \label{line: holder} \\
    \leq & \frac{1}{p}\mathrm{Tr}(X^p) + \frac{1}{q}\mathrm{Tr}(Y^q)\label{line: young}
\end{align}
where (\ref{abs}) is standard for operators, (\ref{line: holder}) follows from Theorem 1.1 in \cite{manjegani2007holder} for positive operators $X$ and $Y$, and (\ref{line: young}) follows from the standard Young's inequality. Note from Theorem 1.1 in \cite{manjegani2007holder} that line (\ref{line: holder}) is saturated if and only if $X^p=Y^q$ which also implies equality of line (\ref{line: young}) as well. Moreover, $X=Y^{\frac{q}{p}}$ also implies that 
\begin{align}
    \mathrm{Tr}|XY| = & \mathrm{Tr}|Y^{\frac{q}{p}}Y| \\
    = & \mathrm{Tr}|Y^{\frac{q+p}{p}}| \\
    = & \mathrm{Tr}(Y^q) \\
    = & \mathrm{Tr}(Y^{\frac{q}{p}}Y) \\
    = & \mathrm{Tr}(XY),
\end{align}
so that (\ref{positive}) is also saturated whenever $X^p=Y^q$. Thus, 
\begin{equation}\label{inequality1}
    p\mathrm{Tr}(XY) - \frac{p}{q}\mathrm{Tr}(Y)^{q} \leq \mathrm{Tr}X^{p}.
\end{equation}
Take  positive operators 
$$ X=(\sigma^{\frac{1-\alpha}{2z}}\rho^{\frac{\alpha}{z}}\sigma^{\frac{1-\alpha}{2z}}) \mbox{ and } Y=(\sigma^{\frac{z-1}{2z}}H\sigma^{\frac{z-1}{2z}}), $$
where $H$ is some positive operator in $\cB(\cH)$. Then it follows that (\ref{inequality1}) becomes
\begin{equation}\label{newIneq1}
    p\mathrm{Tr}(\sigma^{\frac{1-\alpha}{2z}}\rho^{\frac{\alpha}{z}}\sigma^{\frac{1-\alpha}{2z}}\sigma^{\frac{z-1}{2z}}H\sigma^{\frac{z-1}{2z}}) - \frac{p}{q}\mathrm{Tr}\left[(\sigma^{\frac{z-1}{2z}}H\sigma^{\frac{z-1}{2z}})^{q}\right] \leq \mathrm{Tr}\left[(\sigma^{\frac{1-\alpha}{2z}}\rho^{\frac{\alpha}{z}}\sigma^{\frac{1-\alpha}{2z}})^{p}\right],
\end{equation}
where equality happens if and only if
$$ \sigma^{\frac{1-z}{2z}}\left(\sigma^{\frac{1-\alpha}{2z}}\rho^{\frac{\alpha}{z}}\sigma^{\frac{1-\alpha}{2z}}\right)^{\frac{p}{q}}\sigma^{\frac{1-z}{2z}}=H.$$

By construction, $H$ is unique, hence the left hand side of (\ref{inequality1}) becomes 
\begin{align}
     &\hspace*{-1cm} p\mathrm{Tr}\left(\sigma^{\frac{1-\alpha}{2z}}\rho^{\frac{\alpha}{z}}\sigma^{\frac{1-\alpha}{2z}}\sigma^{\frac{z-1}{2z}}\sigma^{\frac{1-z}{2z}}\left(\sigma^{\frac{1-\alpha}{2z}}\rho^{\frac{\alpha}{z}}\sigma^{\frac{1-\alpha}{2z}}\right)^{\frac{p}{q}}\sigma^{\frac{1-z}{2z}}\sigma^{\frac{z-1}{2z}}\right) - \\
     & \frac{p}{q}\mathrm{Tr}\left[\left(\sigma^{\frac{z-1}{2z}}\sigma^{\frac{1-z}{2z}}\left(\sigma^{\frac{1-\alpha}{2z}}\rho^{\frac{\alpha}{z}}\sigma^{\frac{1-\alpha}{2z}}\right)^{\frac{p}{q}}\sigma^{\frac{1-z}{2z}}\sigma^{\frac{z-1}{2z}}\right)^{q}\right] \\
     = & p\mathrm{Tr}\left(\sigma^{\frac{1-\alpha}{2z}}\rho^{\frac{\alpha}{z}}\sigma^{\frac{1-\alpha}{2z}}\left(\sigma^{\frac{1-\alpha}{2z}}\rho^{\frac{\alpha}{z}}\sigma^{\frac{1-\alpha}{2z}}\right)^{\frac{p}{q}}\right) - \frac{p}{q}\mathrm{Tr}\left[\left(\left(\sigma^{\frac{1-\alpha}{2z}}\rho^{\frac{\alpha}{z}}\sigma^{\frac{1-\alpha}{2z}}\right)^{\frac{p}{q}}\right)^{q}\right] \\
     = & p\mathrm{Tr}\left(\left(\sigma^{\frac{1-\alpha}{2z}}\rho^{\frac{\alpha}{z}}\sigma^{\frac{1-\alpha}{2z}}\right)^{\frac{p}{q}+1}\right) - \frac{p}{q}\mathrm{Tr}\left[\left(\sigma^{\frac{1-\alpha}{2z}}\rho^{\frac{\alpha}{z}}\sigma^{\frac{1-\alpha}{2z}}\right)^p\right] \\
     = & \left(p-\frac{p}{q}\right)\mathrm{Tr}\left[\left(\sigma^{\frac{1-\alpha}{2z}}\rho^{\frac{\alpha}{z}}\sigma^{\frac{1-\alpha}{2z}}\right)^p\right]
     \\
     = & \mathrm{Tr}\left[\left(\sigma^{\frac{1-\alpha}{2z}}\rho^{\frac{\alpha}{z}}\sigma^{\frac{1-\alpha}{2z}}\right)^p\right],
\end{align}
where $1=\frac{1}{p}+\frac{1}{q}$ implies that $p+q=pq$ which implies $\frac{p}{q}+1 = p$ and $p-\frac{p}{q}=1$. For a choice of $p=z$ and $q=\frac{z}{z-1}$, where $z>1$ define the left hand side of (\ref{newIneq1}) as 
\begin{equation}\label{function1}
    f_{\alpha,z}(H,\rho,\sigma) :=      z\mathrm{Tr}(\sigma^{\frac{z-\alpha}{2z}}\rho^{\frac{\alpha}{z}}\sigma^{\frac{z-\alpha}{2z}}H) - (z-1)\mathrm{Tr}\left[(\sigma^{\frac{z-1}{2z}}H\sigma^{\frac{z-1}{2z}})^{\frac{z}{z-1}}\right].
\end{equation} Then, the desired result is achieved.
\end{proof}

\begin{proposition}\label{jconvexity}
Let $1<\alpha\leq 2$ and $\frac{\alpha}{2}\leq z\leq\alpha$. For a fixed $H>0$, one has that $f_{\alpha,z}(H,\rho,\sigma)$ from equation (\ref{f1}) is jointly convex in $\rho$ and $\sigma$, where $\sigma$ is in any convex subset of $\mathcal{I}(\mathcal{H})$ from Remark \ref{remark: newset}. 
\end{proposition}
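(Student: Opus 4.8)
The plan is to split $f_{\alpha,z}(H,\rho,\sigma)$ into two pieces, prove each is convex in the relevant variables, and conclude by Proposition \ref{sum}. Write $f_{\alpha,z}(H,\rho,\sigma)=g_1(\rho,\sigma)+g_2(\sigma)$ where
\[
g_1(\rho,\sigma) := z\,\mathrm{Tr}\!\left(\sigma^{\frac{z-\alpha}{2z}}\rho^{\frac{\alpha}{z}}\sigma^{\frac{z-\alpha}{2z}}H\right),
\qquad
g_2(\sigma) := -(z-1)\,\mathrm{Tr}\!\left[\left(\sigma^{\frac{z-1}{2z}}H\sigma^{\frac{z-1}{2z}}\right)^{\frac{z}{z-1}}\right].
\]
The first piece is linear in $H$ and involves both states; the second piece does not involve $\rho$ at all, so for it joint convexity is just convexity in $\sigma$. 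Thus it suffices to show $g_1$ is jointly convex in $(\rho,\sigma)$ and $g_2$ is convex in $\sigma$, for $\sigma$ ranging over a convex subset of $\mathcal{I}(\mathcal{H})$.

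For $g_2$: since $z>1$ (which is in force because $f_{\alpha,z}$ as written in (\ref{f1}) presupposes $z>1$), the prefactor $-(z-1)$ is negative, so convexity of $g_2$ is equivalent to concavity of $\sigma\mapsto\mathrm{Tr}[(\sigma^{\frac{z-1}{2z}}H\sigma^{\frac{z-1}{2z}})^{\frac{z}{z-1}}]$. That is exactly Proposition \ref{newPick} applied with $Z=\mathbbm{1}$ (so $\Phi_Z$ is the identity), with exponent $p=\frac{z-1}{z}\in(0,1)$ in the notation of that proposition — whence $\tfrac p2=\tfrac{z-1}{2z}$ and $\tfrac1p=\tfrac{z}{z-1}$ — and with $A=H\in\mathcal{P}(\mathcal{H})$. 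This also explains the hypothesis ``$\sigma$ in a convex subset of $\mathcal{I}(\mathcal{H})$'' in the statement: it is precisely the hypothesis under which Proposition \ref{newPick} delivers concavity.

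For $g_1$: the constraints $1<\alpha\le2$ and $\tfrac\alpha2\le z\le\alpha$ give $p:=\tfrac\alpha z\in[1,2]$ and $q:=\tfrac{z-\alpha}{z}=1-p\in[-1,0]$, with $p+q=1$ and $(p,q)\neq(1,-1)$. I would aim to recast $g_1$, up to the scalar $z$, as an instance of the trace functional $(A,B)\mapsto\mathrm{Tr}(B^{q/2}K^*A^pKB^{q/2})^s$ from equation (\ref{functional1}), with $A=\rho$, $B=\sigma$, $s=1$, and a fixed invertible $K$ built out of $H$. The parameters then satisfy $-1\le q\le0$, $1\le p\le2$, $(p,q)\neq(1,-1)$ and $s=1\ge\tfrac1{p+q}$, i.e.\ exactly the hypotheses of case $(3)$ of Theorem \ref{Reduced1}, which yields joint convexity of $g_1$ in $(\rho,\sigma)$. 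Combined with the $g_2$ step and Proposition \ref{sum}, this proves the proposition.

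The hard part is the recasting of $g_1$. As written, the fixed operator $H$ sits outside the symmetric sandwich $\sigma^{(1-p)/2}(\,\cdot\,)\sigma^{(1-p)/2}$ around $\rho^{p}$, while the functional (\ref{functional1}) has the fixed conjugation between the two $A^p$ factors and the variable power $B^{q/2}$ symmetrically on the outside — so $g_1$ is not literally in that shape and a genuine manipulation is required. The most natural device is the variational identity of Theorem \ref{zhangresult}: for fixed $H$ one would like to express $g_1$ as a supremum, over an auxiliary invertible operator, of expressions each manifestly covered by Theorem \ref{Reduced1}, so that $g_1$ (a supremum of jointly convex functions) is again jointly convex. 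A fallback is to substitute the L\"owner integral representation of $t\mapsto t^{\alpha/z}$ on $[1,2]$ into $g_1$ and check joint convexity of the resulting affine and resolvent pieces one at a time. In either approach one should also confirm that the complex-analytic restriction to $\mathcal{I}(\mathcal{H})$ used for $g_2$ is consistent with — and, should the $g_1$ argument have a Pick-function character as in Proposition \ref{newPick}, also sufficient for — the estimate on $g_1$.
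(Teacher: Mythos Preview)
Your overall architecture---split $f_{\alpha,z}=g_1+g_2$, handle $g_1$ via Zhang's variational identity (Theorem~\ref{zhangresult}), and assemble with Proposition~\ref{sum}---is exactly what the paper does. The gap is in what happens \emph{after} Zhang. Writing $p=\alpha/z\in[1,2]$ and (in the paper's sign convention) $q=(\alpha-z)/z\in(0,1]$, Theorem~\ref{zhangresult} gives
\[
\mathrm{Tr}\!\bigl(\sigma^{\frac{z-\alpha}{2z}}\rho^{\frac{\alpha}{z}}\sigma^{\frac{z-\alpha}{2z}}H\bigr)
=\max_{Z}\Bigl\{\,p\,\mathrm{Tr}\bigl[(Z^*\rho^{p}Z)^{1/p}\bigr]-q\,\mathrm{Tr}\bigl[(Z^*\sigma^{q/2}H^{-1}\sigma^{q/2}Z)^{1/q}\bigr]\Bigr\}.
\]
The $\rho$-term is indeed covered (as the degenerate case $q'=0$) by Theorem~\ref{Reduced1}, or more directly by Proposition~\ref{fixedB}(2). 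But the $\sigma$-term is \emph{not} of the shape~(\ref{functional1}): the fixed operator $H^{-1}$ sits between two copies of $\sigma^{q/2}$ with the auxiliary $Z$ on the outside, and no fixed $K$ in $\mathrm{Tr}(B^{q'/2}K^*A^{p'}KB^{q'/2})^s$ reproduces $Z^*\sigma^{q/2}H^{-1}\sigma^{q/2}Z$ for all $\sigma$. So Theorem~\ref{Reduced1} cannot deliver the concavity of $\sigma\mapsto\mathrm{Tr}[(Z^*\sigma^{q/2}H^{-1}\sigma^{q/2}Z)^{1/q}]$ that you need. This is precisely the job of Proposition~\ref{newPick} (take $\Phi_Z(X)=Z^*XZ$ and $A=Z^{-1}H^{-1}(Z^*)^{-1}$), and it is \emph{here}---not in $g_2$---that the restriction $\sigma\in\mathcal{I}(\mathcal{H})$ actually enters the argument.

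Conversely, your treatment of $g_2$ via Proposition~\ref{newPick} is valid but unnecessary: the eigenvalue identity
\[
\mathrm{Tr}\!\left[\bigl(\sigma^{\frac{z-1}{2z}}H\sigma^{\frac{z-1}{2z}}\bigr)^{\frac{z}{z-1}}\right]
=\mathrm{Tr}\!\left[\bigl(H^{1/2}\sigma^{\frac{z-1}{z}}H^{1/2}\bigr)^{\frac{z}{z-1}}\right]
\]
puts $g_2$ directly in the scope of Proposition~\ref{fixedB}(1), with no $\mathcal{I}(\mathcal{H})$ hypothesis needed. In short, swap the tools: use Proposition~\ref{fixedB} for $g_2$ and reserve Proposition~\ref{newPick} for the $\sigma$-piece of the Zhang decomposition of $g_1$. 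Your L\"owner-integral fallback does not sidestep the issue, since the same ``$H$ trapped between two $\sigma^{q/2}$'s'' obstruction persists in each resolvent piece.
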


\begin{proof}
Fix $H>0$. Let $p = \frac{\alpha}{z}$, $q = \frac{\alpha-z}{z}$, $X=\rho^{\frac{p}{2}}$, and $Y = \sigma^{-\frac{q}{2}}H^{\frac{1}{2}}$. Note that \begin{align}
    \mathrm{Tr}\left(\sigma^{\frac{z-\alpha}{2z}}\rho^{\frac{\alpha}{z}}\sigma^{\frac{z-\alpha}{2z}}H\right) & = \mathrm{Tr}\left(H^{\frac{1}{2}}\sigma^{\frac{z-\alpha}{2z}}\rho^{\frac{\alpha}{z}}\sigma^{\frac{z-\alpha}{2z}}H^{\frac{1}{2}}\right) \\
    & = \mathrm{Tr}\left(H^{\frac{1}{2}}\sigma^{\frac{z-\alpha}{2z}}\rho^{\frac{\alpha}{2z}}\rho^{\frac{\alpha}{2z}}\sigma^{\frac{z-\alpha}{2z}}H^{\frac{1}{2}}\right) \\
    & = \mathrm{Tr}\left(H^{\frac{1}{2}}\sigma^{-\frac{q}{2}}\rho^{\frac{p}{2}}\rho^{\frac{p}{2}}\sigma^{-\frac{q}{2}}H^{\frac{1}{2}}\right)\\
    & = \mathrm{Tr}\left(Y^*X^*XY\right)\\
    & = \mathrm{Tr}|XY|^2.
\end{align}
Since $1<z\leq\alpha\leq 2z$, observe that $1\leq p = \frac{\alpha}{z}\leq 2$ and $0< q = \frac{\alpha-z}{z}\leq 1$. Both $p$ and $q$ are both positive numbers, so set $(r_0,r_1,r_2) = (\frac{2}{p},2,\frac{2}{q})$.


Then by Theorem \ref{zhangresult}, 
\begin{equation}\label{partialsup}
\mathrm{Tr}\left(\sigma^{\frac{z-\alpha}{2z}}\rho^{\frac{\alpha}{z}}\sigma^{\frac{z-\alpha}{2z}}H\right) = \max\left\{ p\mathrm{Tr}|\rho^{\frac{p}{2}}Z|^{\frac{2}{p}}-q\mathrm{Tr}|H^{-\frac{1}{2}}\sigma^{\frac{q}{2}}Z|^{\frac{2}{q}} \right\},
\end{equation}
where $Z$ is invertible. Since $1\leq p\leq 2$, it follows from Proposition \ref{fixedB} part 2 that 
$$\rho\mapsto \mathrm{Tr}|\rho^{\frac{p}{2}}Z|^{\frac{2}{p}} = \mathrm{Tr}\left[\left(Z^*\rho^p Z\right)^{\frac{1}{p}}\right] $$ is convex in $\rho$. 
For the second term in (\ref{partialsup}), choose $\Phi_Z(X):= Z^*XZ$, which is linear, positive, and self adjoint. Let $A = Z^{-1}H^{-1}(Z^*)^{-1}$, then by Proposition \ref{newPick}, 
\begin{align}
    \mathrm{Tr}|H^{-\frac{1}{2}}\sigma^{\frac{q}{2}}Z|^{\frac{2}{q}} & = \mathrm{Tr}\left(\left[(H^{-\frac{1}{2}}\sigma^{\frac{q}{2}}Z)^*(H^{-\frac{1}{2}}\sigma^{\frac{q}{2}}Z)\right]^{\frac{1}{q}}\right) \\
    & = \mathrm{Tr}\left(\left[Z^*\sigma^{\frac{q}{2}}H^{-1}\sigma^{\frac{q}{2}}Z\right]^{\frac{1}{q}}\right) \\
    & = \mathrm{Tr}\left(\left[Z^*\sigma^{\frac{q}{2}}ZZ^{-1}H^{-1}(Z^*)^{-1}Z^*\sigma^{\frac{q}{2}}Z\right]^{\frac{1}{q}}\right) \\
    & = \mathrm{Tr}\left[\left\{\Phi_Z\left(\sigma^{\frac{q}{2}}\right)A \,\Phi_Z\left(\sigma^{\frac{q}{2}}\right)\right\}^{\frac{1}{q}}\right]
\end{align} is concave in $\sigma$. Since $0<q$,
$$\sigma\mapsto -q\mathrm{Tr}|H^{-\frac{1}{2}}\sigma^{-\frac{q}{2}}Z|^{\frac{2}{q}} $$ is convex in $\sigma$. By Proposition \ref{sum}, one concludes that $\mathrm{Tr}\left(\sigma^{\frac{z-\alpha}{2z}}\rho^{\frac{\alpha}{z}}\sigma^{\frac{z-\alpha}{2z}}H\right)$ is the maximum of a sum of two convex functionals and thus is itself convex in $\rho$ and $\sigma$. 
$ $ 


On the other hand, since $\sigma^{\frac{z-1}{2z}}H\sigma^{\frac{z-1}{2z}}$ and $H^{\frac{1}{2}}\sigma^{\frac{z-1}{z}}H^{\frac{1}{2}}$ have the same nonzero eigenvalues,  
$$\mathrm{Tr}\left(\left[\sigma^{\frac{z-1}{2z}}H\sigma^{\frac{z-1}{2z}}\right]^{\frac{z}{z-1}}\right)=\mathrm{Tr}\left(\left[H^{\frac{1}{2}}\sigma^{\frac{z-1}{z}}H^{\frac{1}{2}}\right]^{\frac{z}{z-1}}\right).$$
By Proposition \ref{fixedB} part 1, $\mathrm{Tr}\left(\left[\sigma^{\frac{z-1}{2z}}H\sigma^{\frac{z-1}{2z}}\right]^{\frac{z}{z-1}}\right)$ is concave in $\sigma$, for all $z>1$. Hence $$-(z-1)\mathrm{Tr}\left(\left[\sigma^{\frac{z-1}{2z}}H\sigma^{\frac{z-1}{2z}}\right]^{\frac{z}{z-1}}\right)$$ is convex in $\sigma$. As a consequence, by Proposition \ref{sum},
$$ f_{\alpha,z}(H,\rho,\sigma)=      z\mathrm{Tr}\left(\sigma^{\frac{z-\alpha}{2z}}\rho^{\frac{\alpha}{z}}\sigma^{\frac{z-\alpha}{2z}}H\right) - (z-1)\mathrm{Tr}\left(\left[\sigma^{\frac{z-1}{2z}}H\sigma^{\frac{z-1}{2z}}\right]^{\frac{z}{z-1}}\right) $$ is a sum of two convex functionals in $\rho$ and $\sigma$ as desired. 
\end{proof}


\section{Main Result}\label{sec: main}

\begin{remark}
Let us denote $\mathcal{Q}(\mathcal{H}_{AB})$ as
\begin{equation}
    \cQ(\cH_{AB}):= \{X_{AB}\in\cI(\cH_{AB}) : X_A\otimes\pi_B\in\cI(\cH_{AB})\},
\end{equation}
where $\pi_B$ is the maximally mixed state on Hilbert space $\cH_B$.
\end{remark}

The techniques used here are inspired by \cite{sandwichConditions} and \cite{CFLconjecture}. The main result of this section is a consequence of the following theorem. 

\begin{theorem}[Necessary Partial Trace Case]\label{partial case1}
Let $\rho_{AB}\in\cD(\cH_{AB})$ and $\sigma_{AB}\in\cQ(\cH_{AB})$. For any $1<\alpha\leq 2$ and $\frac{\alpha}{2}\leq z\leq\alpha$, whenever saturation of the DPI holds. i.e., $D_{\alpha,z}(\rho_{AB}||\sigma_{AB})=D_{\alpha,z}(\rho_{A}||\sigma_{A})$, then the states satisfy 
\begin{equation}
\sigma_{AB}^{\frac{1-z}{2z}}\left(\sigma_{AB}^{\frac{1-\alpha}{2z}}\rho_{AB}^{\frac{\alpha}{z}}\sigma_{AB}^{\frac{1-\alpha}{2z}}\right)^{z-1}\sigma_{AB}^{\frac{1-z}{2z}}=\sigma_A^{\frac{1-z}{2z}}\left(\sigma_A^{\frac{1-\alpha}{2z}}\rho_A^{\frac{\alpha}{z}}\sigma_A^{\frac{1-\alpha}{2z}}\right)^{z-1}\sigma_A^{\frac{1-z}{2z}}.
\end{equation}
\end{theorem}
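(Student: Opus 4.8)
The plan is to exploit the variational formula established in Lemma~\ref{function2}, namely
$\Psi_{\alpha,z}(\rho||\sigma)=\sup_{H>0} f_{\alpha,z}(H,\rho,\sigma)$ with the supremum attained at
$H_{\rho,\sigma}:=\sigma^{\frac{1-z}{2z}}(\sigma^{\frac{1-\alpha}{2z}}\rho^{\frac{\alpha}{z}}\sigma^{\frac{1-\alpha}{2z}})^{z-1}\sigma^{\frac{1-z}{2z}}$, together with the joint convexity of $f_{\alpha,z}(H,\cdot,\cdot)$ from Proposition~\ref{jconvexity}. The idea, following \cite{sandwichConditions} and \cite{CFLconjecture}, is that saturating the DPI forces saturation of a convexity inequality, and in a strictly convex situation equality in a convexity estimate pins down the argument uniquely — here that argument will be the optimal $H$. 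First I would note that the channel here is the partial trace $\mathcal{N}=\mathrm{Tr}_B$, and that $D_{\alpha,z}$ saturation is equivalent to $\Psi_{\alpha,z}(\rho_{AB}||\sigma_{AB})=\Psi_{\alpha,z}(\rho_A||\sigma_A)$ (taking into account the sign of $\alpha-1>0$ and monotonicity of $\log$). Since $\Psi_{\alpha,z}(\rho_A||\sigma_A)=f_{\alpha,z}(H_{\rho_A,\sigma_A},\rho_A,\sigma_A)$, and $f$ is linear in $H$ in its first term while the second term depends on $\sigma$ and $H$ only through $\sigma^{\frac{z-1}{2z}}H\sigma^{\frac{z-1}{2z}}$, I would lift the optimal $H$ for the reduced states to the full system: set $\tilde H:=H_{\rho_A,\sigma_A}\otimes \mathbbm{1}_B$ (or the appropriately $B$-padded version) and compute $f_{\alpha,z}(\tilde H,\rho_{AB},\sigma_{AB})$, using that $\mathrm{Tr}_B$ is trace-preserving and the identities $\mathrm{Tr}_{AB}(X_A\otimes\mathbbm{1}_B \cdot Y_{AB}) = \mathrm{Tr}_A(X_A \,\mathrm{Tr}_B Y_{AB})$.

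The key computation is to show $f_{\alpha,z}(\tilde H,\rho_{AB},\sigma_{AB}) = f_{\alpha,z}(H_{\rho_A,\sigma_A},\rho_A,\sigma_A) = \Psi_{\alpha,z}(\rho_A||\sigma_A)$. For the first (linear-in-$H$) term this is a direct partial-trace identity: $z\,\mathrm{Tr}(\sigma_{AB}^{\frac{z-\alpha}{2z}}\rho_{AB}^{\frac{\alpha}{z}}\sigma_{AB}^{\frac{z-\alpha}{2z}}\tilde H)$ — wait, this does not obviously reduce, since $\sigma_{AB}^{\frac{z-\alpha}{2z}}$ is not $\sigma_A^{\frac{z-\alpha}{2z}}\otimes(\cdots)$ in general. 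This is where the condition $\sigma_{AB}\in\cQ(\cH_{AB})$ and the structure need to be used carefully; the honest route is instead the following. By the DPI/monotonicity for $\Psi_{\alpha,z}$ (Theorem~\ref{alphazDPI}, via joint convexity and Theorem~\ref{dpi1}) we always have $\Psi_{\alpha,z}(\rho_{AB}||\sigma_{AB})\le \Psi_{\alpha,z}(\rho_A||\sigma_A)$ when $\alpha>1$. Using the variational formula on the left, $\Psi_{\alpha,z}(\rho_{AB}||\sigma_{AB}) = \sup_{H_{AB}>0} f_{\alpha,z}(H_{AB},\rho_{AB},\sigma_{AB}) \ge f_{\alpha,z}(\tilde H,\rho_{AB},\sigma_{AB})$. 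So it suffices to show $f_{\alpha,z}(\tilde H,\rho_{AB},\sigma_{AB}) = \Psi_{\alpha,z}(\rho_A||\sigma_A)$: combined with the saturation hypothesis $\Psi_{\alpha,z}(\rho_{AB}||\sigma_{AB})=\Psi_{\alpha,z}(\rho_A||\sigma_A)$, this forces $\tilde H$ to be a maximizer of $H_{AB}\mapsto f_{\alpha,z}(H_{AB},\rho_{AB},\sigma_{AB})$, and by the uniqueness of the maximizer in Lemma~\ref{function2} we get $\tilde H = H_{\rho_{AB},\sigma_{AB}}$, which unwinds to precisely the claimed operator identity.

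So the technical heart reduces to the identity $f_{\alpha,z}(\tilde H,\rho_{AB},\sigma_{AB}) = f_{\alpha,z}(H_{\rho_A,\sigma_A},\rho_A,\sigma_A)$. I would prove this by introducing the ``pinching''/block-diagonal comparison: because $f_{\alpha,z}$ is jointly convex in $(\rho,\sigma)$ for fixed $H$ (Proposition~\ref{jconvexity}) and invariant under the conditional-expectation-type averaging built from $\mathrm{Tr}_B$ and the maximally mixed $\pi_B$, one compares $f_{\alpha,z}(\tilde H,\rho_{AB},\sigma_{AB})$ with $f_{\alpha,z}(\tilde H,\rho_A\otimes\pi_B,\sigma_A\otimes\pi_B)$; the latter factorizes by the tensor property and equals $f_{\alpha,z}(H_{\rho_A,\sigma_A},\rho_A,\sigma_A)$ up to the normalization absorbed into the $\frac{1-z}{2z},\frac{1-\alpha}{2z}$ powers of $\pi_B=\tfrac{1}{d_B}\mathbbm{1}_B$ (which cancel between the two terms of $f$ precisely when $\tilde H$ is padded by $\mathbbm{1}_B$ rather than $\pi_B$ — this bookkeeping, checking that the $d_B$-powers cancel, is the fiddly but routine part). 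The main obstacle I anticipate is exactly this last point: verifying that the padding of $H$ and the scalar factors from $\pi_B$ conspire so that $f_{\alpha,z}(\tilde H,\rho_{AB},\sigma_{AB})$ reproduces the reduced value \emph{on the nose} (not just up to inequality), and ensuring the hypothesis $\sigma_{AB}\in\cQ(\cH_{AB})$ is exactly what licenses the use of Proposition~\ref{newPick}/Proposition~\ref{jconvexity} on the lifted operators (so that $\sigma_{AB}$ and $\sigma_A\otimes\pi_B$ both lie in a common convex subset of $\cI(\cH_{AB})$). Once that identity is in hand, the uniqueness-of-maximizer argument closes the proof immediately.
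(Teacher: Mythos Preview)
Your approach is essentially the paper's: write the averaged states $\rho_A\otimes\pi_B$, $\sigma_A\otimes\pi_B$ as convex combinations of unitary conjugates of $\rho_{AB},\sigma_{AB}$ (generalized Pauli twirl on $B$), apply the joint convexity of $f_{\alpha,z}(H,\cdot,\cdot)$ from Proposition~\ref{jconvexity} with $H$ fixed to be the optimizer for the averaged pair, and conclude by uniqueness of the maximizer in Lemma~\ref{function2}. The paper runs the chain
\[
\Psi_{\alpha,z}(\tilde\rho\|\tilde\sigma)=f_{\alpha,z}(\bar H,\tilde\rho,\tilde\sigma)\le\sum_i\lambda_i f_{\alpha,z}(\bar H,\rho_i,\sigma_i)\le\sum_i\lambda_i f_{\alpha,z}(H_i,\rho_i,\sigma_i)=\sum_i\lambda_i\Psi_{\alpha,z}(\rho_i\|\sigma_i),
\]
and saturation collapses it, forcing $\bar H=H_i$ for every $i$; one of the $v_i$ is the identity, so $H_i=H_{\rho_{AB},\sigma_{AB}}$, while $\bar H=H_{\rho_A,\sigma_A}\otimes\mathbbm{1}_B$ by a direct tensor computation.

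Two corrections to your write-up. First, the DPI direction for $\Psi_{\alpha,z}$ is reversed: since $D_{\alpha,z}=\tfrac{1}{\alpha-1}\log\Psi_{\alpha,z}$ with $\alpha>1$, the DPI gives $\Psi_{\alpha,z}(\rho_{AB}\|\sigma_{AB})\ge\Psi_{\alpha,z}(\rho_A\|\sigma_A)$, not $\le$. Second, and more importantly, you do \emph{not} need the identity $f_{\alpha,z}(\tilde H,\rho_{AB},\sigma_{AB})=\Psi_{\alpha,z}(\rho_A\|\sigma_A)$ ``on the nose'' a priori; joint convexity only delivers an inequality, and that is exactly enough. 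Concretely, since $\tilde H=H_{\rho_A,\sigma_A}\otimes\mathbbm{1}_B$ is invariant under $(\mathbbm{1}\otimes v_i)^*(\cdot)(\mathbbm{1}\otimes v_i)$, unitary covariance of $f$ gives $f_{\alpha,z}(\tilde H,\rho_i,\sigma_i)=f_{\alpha,z}(\tilde H,\rho_{AB},\sigma_{AB})$ for all $i$, and hence
\[
\Psi_{\alpha,z}(\rho_A\|\sigma_A)=f_{\alpha,z}(\tilde H,\rho_A\otimes\pi_B,\sigma_A\otimes\pi_B)\le\sum_i\lambda_i f_{\alpha,z}(\tilde H,\rho_i,\sigma_i)=f_{\alpha,z}(\tilde H,\rho_{AB},\sigma_{AB})\le\Psi_{\alpha,z}(\rho_{AB}\|\sigma_{AB}).
\]
Under the saturation hypothesis both ends coincide, so $\tilde H$ is a maximizer for $(\rho_{AB},\sigma_{AB})$ and uniqueness finishes. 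The ``obstacle'' you flag (checking that $d_B$-powers cancel to give an exact equality before invoking saturation) is therefore not one: the tensor bookkeeping only needs to produce the left-hand identity $f_{\alpha,z}(\tilde H,\rho_A\otimes\pi_B,\sigma_A\otimes\pi_B)=f_{\alpha,z}(H_{\rho_A,\sigma_A},\rho_A,\sigma_A)$, which is a straightforward computation using $\mathrm{Tr}(\pi_B)=1$.
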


\begin{proof}
Assume $1<\alpha\leq 2$ and $\frac{\alpha}{2}\leq z\leq\alpha$, denote $d=\mbox{dim}(\cH_B)$, and define
$$\rho_i:=(\mathbbm{1}\otimes v_i)\rho_{AB}(\mathbbm{1}\otimes v_i^*),$$
where $\rho_{AB}\in\cD(\cH_{AB})$ and $\{v_i\}_{i=1}^{d^2}$ are the generalized Pauli \footnote{For more details on the generalized Pauli matrices, see \cite{wildeBook} Chapter 3.7 for more details.} matrices. Similarly, define 
$$\sigma_i:=(\mathbbm{1}\otimes v_i)\sigma_{AB}(\mathbbm{1}\otimes v_i^*),$$
where $\sigma_{AB}\in\cI(\cH_{AB})$. Then $\sigma_i\in\cI(\cH_{AB})$, for all $i$. Define $\lambda_i=\frac{1}{d^2}$, for all $i=1,\dots,d^2$, and let 
$$\tilde{\rho}=\sum\limits_{i=1}^{d^2}\lambda_i\rho_i \mbox{ and } \tilde{\sigma}=\sum\limits_{i=1}^{d^2}\lambda_i\sigma_i.$$ 
As mentioned in \cite{sandwichConditions}, 
$$\tilde{\rho}=\sum\limits_{i=1}^{d^2}\lambda_i\rho_i=\rho_A\otimes \pi_B \text{  and  }\tilde{\sigma}=\sum\limits_{i=1}^{d^2}\lambda_i\sigma_i=\sigma_A\otimes\pi_B,$$
where $\pi_B$ is the completely mixed state \footnote{See \cite{wildeBook} exercise 4.7.6 for an explanation of how the Generalized Pauli operators randomly applied to any density operator with uniform probability give us a maximally mixed state} on $\cH_B$. i.e., $\pi_B=\frac{\mathbbm{1}}{d}$. Note that $$\tilde{\sigma} = \sigma_A\otimes\pi_B\in\cQ(\cH_{AB})\subseteq\cI(\cH_{AB}).$$
Define
\begin{equation}\label{argMax1}
    \bar{H}:=\argmax_{H>0}f_{\alpha,z}(H,\tilde{\rho},\tilde{\sigma}) \mbox{ and } H_i:=\argmax_{H>0}f_{\alpha,z}(H,\tilde{\rho_i},\tilde{\sigma_i}), \text{ where}
\end{equation}
\begin{equation}
    f_{\alpha,z}(H,\rho,\sigma) :=      z\mathrm{Tr}(\sigma^{\frac{z-\alpha}{2z}}\rho^{\frac{\alpha}{z}}\sigma^{\frac{z-\alpha}{2z}}H) - (z-1)\mathrm{Tr}\left[(\sigma^{\frac{z-1}{2z}}H\sigma^{\frac{z-1}{2z}})^{\frac{z}{z-1}}\right],
\end{equation}
from Lemma \ref{function2}.
Note that the trace functional $\Psi_{\alpha,z}(\rho||\sigma)$, mentioned after theorem \ref{alphazDPI} is proven jointly convex in \cite{CFLconjecture} in a more general setting. The following chain of inequalities says:
\begin{align}
    \Psi_{\alpha,z}(\tilde{\rho}||\tilde{\sigma})&=f_{\alpha,z}(\bar{H},\tilde{\rho},\tilde{\sigma})\label{11}\\
    &  \leq \sum\limits_{i=1}^{d^2}\lambda_i f_{\alpha,z}(\bar{H},\rho_i,\sigma_i) \label{21} \\
    & \leq \sum\limits_{i=1}^{d^2}\lambda_i f_{\alpha,z}(H_i,\rho_i,\sigma_i) \label{31} \\
    & =\sum\limits_{i=1}^{d^2}\lambda_i \Psi_{\alpha,z}(\rho_i||\sigma_i) \label{41},
\end{align} 
where lines (\ref{11}) and (\ref{41}) are from Lemma \ref{function2}, line (\ref{21}) is from the joint convexity of $f$, which is proven in Proposition \ref{jconvexity}, and line (\ref{31}) is from the fact that $H_i$ is the maximizer for $f_{\alpha,z}(H_i,\rho_i,\sigma_i)$ from (\ref{argMax1}). Assuming saturation of the DPI is equivalent to $$\Psi_{\alpha,z}(\tilde{\rho}||\tilde{\sigma})=\sum\limits_{i=1}^{d^2}\lambda_i\Psi_{\alpha,z}(\rho_i||\sigma_i).$$
Then the chain of inequalities above is now a chain of equalities and thus by the definition of $H_i$, 
$$f_{\alpha,z}(\bar{H},\rho_i,\sigma_i)=f_{\alpha,z}(H_i,\rho_i,\sigma_i) \mbox{ for all } i=1,\dots,d^2.$$
By the uniqueness of the maximizer $\bar{H}$, which is proven above in Lemma \ref{function2},  $\bar{H}=H_i$ for all $i=1,\dots,d^2$. 

Recall that because an operator $X$ and $UXU^*$ have the same eigenvalues, where $U$ is any unitary, then for any function $f$ it follows that $f(UXU^*)=Uf(X)U^*$. Therefore from Lemma \ref{function2}, one has an explicit form of the maximizer: $H_i=$ \begin{align}
     &\sigma_i^{\frac{1-z}{2z}}\left(\sigma_i^{\frac{1-\alpha}{2z}}\rho_i^{\frac{\alpha}{z}}\sigma_i^{\frac{1-\alpha}{2z}}\right)^{z-1}\sigma_i^{\frac{1-z}{2z}} \\
     = &[(\mathbbm{1}\otimes v_i)\sigma_{AB}(\mathbbm{1}\otimes v_i^*)]^{\frac{1-z}{2z}}\\
    &\left([(\mathbbm{1}\otimes v_i)\sigma_{AB}(\mathbbm{1}\otimes v_i^*)]^{\frac{1-\alpha}{2z}}[(\mathbbm{1}\otimes v_i)\rho_{AB}(\mathbbm{1}\otimes v_i^*)]^{\frac{\alpha}{z}}[(\mathbbm{1}\otimes v_i)\sigma_{AB}(\mathbbm{1}\otimes v_i^*)]^{\frac{1-\alpha}{2z}}\right)^{z-1}\\
    &[(\mathbbm{1}\otimes v_i)\sigma_{AB}(\mathbbm{1}\otimes v_i^*)]^{\frac{1-z}{2z}} \\
    = & (\mathbbm{1}\otimes v_i)\sigma_{AB}^{\frac{1-z}{2z}}\left(\sigma_{AB}^{\frac{1-\alpha}{2z}}\rho_{AB}^{\frac{\alpha}{z}}\sigma_{AB}^{\frac{1-\alpha}{2z}}\right)^{z-1}\sigma_{AB}^{\frac{1-z}{2z}}(\mathbbm{1}\otimes v_i^*). 
\end{align}
This holds for all $v_i$ due to the fact that $v_i^*v_i=I$. Therefore for some $i\in\{1,\dots,d^2\}$, 
$$H_i=\sigma_{AB}^{\frac{1-z}{2z}}\left(\sigma_{AB}^{\frac{1-\alpha}{2z}}\rho_{AB}^{\frac{\alpha}{z}}\sigma_{AB}^{\frac{1-\alpha}{2z}}\right)^{z-1}\sigma_{AB}^{\frac{1-z}{2z}}.$$ 
Also by similar calculations, $\bar{H}=$
\begin{align}
     &  (\sigma_A\otimes\pi_B)^{\frac{1-z}{2z}}\left((\sigma_A\otimes\pi_B)^{\frac{1-\alpha}{2z}}(\rho_A\otimes\pi_B)^{\frac{\alpha}{z}}(\sigma_A\otimes\pi_B)^{\frac{1-\alpha}{2z}}\right)^{z-1}(\sigma_A\otimes\pi_B)^{\frac{1-z}{2z}}\\
    & = \sigma_A^{\frac{1-z}{2z}}\left(\sigma_A^{\frac{1-\alpha}{2z}}\rho_A^{\frac{\alpha}{z}}\sigma_A^{\frac{1-\alpha}{2z}}\right)^{z-1}\sigma_A^{\frac{1-z}{2z}} \otimes \pi_B^{\left(\frac{1-z}{2z}+(\frac{1-\alpha}{2z}+\frac{\alpha}{z}+\frac{1-\alpha}{2z})(z-1)+\frac{1-z}{2z}\right)} \\
    & = \sigma_A^{\frac{1-z}{2z}}\left(\sigma_A^{\frac{1-\alpha}{2z}}\rho_A^{\frac{\alpha}{z}}\sigma_A^{\frac{1-\alpha}{2z}}\right)^{z-1}\sigma_A^{\frac{1-z}{2z}} \otimes \mathbbm{1}_B \label{zero1},
\end{align}
where (\ref{zero1}) holds because $\frac{1-z}{2z}+(\frac{1-\alpha}{2z}+\frac{\alpha}{z}+\frac{1-\alpha}{2z})(z-1)+\frac{1-z}{2z} = 0$. 
Thus 
\begin{equation}
\sigma_{AB}^{\frac{1-z}{2z}}\left(\sigma_{AB}^{\frac{1-\alpha}{2z}}\rho_{AB}^{\frac{\alpha}{z}}\sigma_{AB}^{\frac{1-\alpha}{2z}}\right)^{z-1}\sigma_{AB}^{\frac{1-z}{2z}}=\sigma_A^{\frac{1-z}{2z}}\left(\sigma_A^{\frac{1-\alpha}{2z}}\rho_A^{\frac{\alpha}{z}}\sigma_A^{\frac{1-\alpha}{2z}}\right)^{z-1}\sigma_A^{\frac{1-z}{2z}}.
\end{equation} 
\end{proof}

Next is the generalization of the partial case trace using a standard Stinespring Dilation argument. 
\begin{corollary}\label{full case}
Let $\rho\in\cD(\cH)$, $\sigma\in\mathcal{Q}(\mathcal{H})$, and $\Lambda:\cB(\cH)\to\cB(\cK)$ be a quantum channel. For any $1<\alpha\leq 2$ and $\frac{\alpha}{2}\leq z\leq\alpha$,  whenever saturation of the DPI holds, i.e., $D_{\alpha,z}(\rho||\sigma)=D_{\alpha,z}(\Lambda(\rho)||\Lambda(\sigma))$,   then the states satisfy  
\begin{equation}
\sigma^{\frac{1-z}{2z}}(\sigma^{\frac{1-\alpha}{2z}}\rho^{\frac{\alpha}{z}}\sigma^{\frac{1-\alpha}{2z}})^{z-1}\sigma^{\frac{1-z}{2z}}=\Lambda^*\left(\Lambda(\sigma)^{\frac{1-z}{2z}}(\Lambda(\sigma)^{\frac{1-\alpha}{2z}}\rho^{\frac{\alpha}{z}}\Lambda(\sigma)^{\frac{1-\alpha}{2z}})^{z-1}\Lambda(\sigma)^{\frac{1-z}{2z}}\right).
\end{equation}
\end{corollary}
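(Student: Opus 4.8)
The plan is to reduce Corollary~\ref{full case} to the partial-trace case, Theorem~\ref{partial case1}, via a Stinespring dilation that realizes $\Lambda$ as conjugation by an isometry followed by a partial trace. By Stinespring's theorem there are a finite-dimensional Hilbert space $\cH_E$ and an isometry $V:\cH\to\cK\otimes\cH_E$ with $V^*V=\mathbbm{1}_\cH$, $\Lambda(X)=\mathrm{Tr}_{\cH_E}(VXV^*)$, and consequently $\Lambda^*(Y)=V^*(Y\otimes\mathbbm{1}_{\cH_E})V$. Put $\cH_A:=\cK$, $\cH_B:=\cH_E$, and $\rho_{AB}:=V\rho V^*$, $\sigma_{AB}:=V\sigma V^*$ on $\cH_{AB}:=\cK\otimes\cH_E$; then $\rho_A=\mathrm{Tr}_{\cH_E}(\rho_{AB})=\Lambda(\rho)$ and $\sigma_A=\mathrm{Tr}_{\cH_E}(\sigma_{AB})=\Lambda(\sigma)$.

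First I would record the isometric invariance of the building blocks. If $\{e_k\}$ diagonalizes a positive operator $X$ on $\cH$, then $\{Ve_k\}$ is orthonormal, so $(VXV^*)^t=VX^tV^*$ for every real $t$ (powers read on the support, as elsewhere in the paper), and since $V^*V=\mathbbm{1}$ such factors telescope, e.g.
\begin{equation}
\sigma_{AB}^{\frac{1-\alpha}{2z}}\rho_{AB}^{\frac{\alpha}{z}}\sigma_{AB}^{\frac{1-\alpha}{2z}}=V\left(\sigma^{\frac{1-\alpha}{2z}}\rho^{\frac{\alpha}{z}}\sigma^{\frac{1-\alpha}{2z}}\right)V^*.
\end{equation}
Raising to the power $z$ and using that the trace is unchanged under conjugation by an isometry gives $\Psi_{\alpha,z}(\rho_{AB}\|\sigma_{AB})=\Psi_{\alpha,z}(\rho\|\sigma)$, hence $D_{\alpha,z}(\rho_{AB}\|\sigma_{AB})=D_{\alpha,z}(\rho\|\sigma)$. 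Together with $D_{\alpha,z}(\rho_A\|\sigma_A)=D_{\alpha,z}(\Lambda(\rho)\|\Lambda(\sigma))$, the saturation hypothesis of the corollary becomes exactly the hypothesis $D_{\alpha,z}(\rho_{AB}\|\sigma_{AB})=D_{\alpha,z}(\rho_A\|\sigma_A)$ of Theorem~\ref{partial case1}.

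Next I would invoke Theorem~\ref{partial case1}, which gives $\sigma_{AB}^{\frac{1-z}{2z}}(\sigma_{AB}^{\frac{1-\alpha}{2z}}\rho_{AB}^{\frac{\alpha}{z}}\sigma_{AB}^{\frac{1-\alpha}{2z}})^{z-1}\sigma_{AB}^{\frac{1-z}{2z}}=S\otimes\mathbbm{1}_{\cH_E}$, where $S:=\sigma_A^{\frac{1-z}{2z}}(\sigma_A^{\frac{1-\alpha}{2z}}\rho_A^{\frac{\alpha}{z}}\sigma_A^{\frac{1-\alpha}{2z}})^{z-1}\sigma_A^{\frac{1-z}{2z}}$ is the operator on $\cK$ displayed there (with $\rho_A=\Lambda(\rho)$, $\sigma_A=\Lambda(\sigma)$) and the factor $\otimes\mathbbm{1}_{\cH_E}$ is the form obtained for $\bar H$ inside that proof. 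Since $z>1$, the exponent $z-1$ is positive, so the telescoping identity above also passes through the outer $(z-1)$-th power and the two outer $\sigma_{AB}^{(1-z)/2z}$ factors; hence the left side equals $V\,T\,V^*$ with $T:=\sigma^{\frac{1-z}{2z}}(\sigma^{\frac{1-\alpha}{2z}}\rho^{\frac{\alpha}{z}}\sigma^{\frac{1-\alpha}{2z}})^{z-1}\sigma^{\frac{1-z}{2z}}$, the left side of the corollary. Finally, conjugating $V\,T\,V^*=S\otimes\mathbbm{1}_{\cH_E}$ by $V^*(\cdot)V$ and using $V^*V=\mathbbm{1}$ on the left and $\Lambda^*(S)=V^*(S\otimes\mathbbm{1}_{\cH_E})V$ on the right yields $T=\Lambda^*(S)$, which is the assertion.

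The main obstacle is not conceptual but a matter of care with supports: $\rho_{AB}=V\rho V^*$ and $\sigma_{AB}=V\sigma V^*$ are supported only on $\mathrm{range}(V)$, so they are not invertible on $\cH_{AB}$, whereas Theorem~\ref{partial case1} and the convexity input Proposition~\ref{jconvexity} are phrased for invertible arguments in $\cD(\cH_{AB})$ and $\cQ(\cH_{AB})$. One must check that the functional-calculus steps, especially those with the negative exponents $\frac{1-\alpha}{2z}$ and $\frac{1-z}{2z}$, remain legitimate under the support conventions already in force (all powers of $\sigma_{AB}$ taken on $\mathrm{supp}(\sigma_{AB})=\mathrm{range}(V)$, with $\mathrm{supp}(\rho_{AB})\subseteq\mathrm{supp}(\sigma_{AB})$ inherited from $\mathrm{supp}(\rho)\subseteq\mathrm{supp}(\sigma)$), and that $V\sigma V^*$ lies in the class $\cQ$ needed for the joint-convexity step; this is precisely what the hypothesis $\sigma\in\cQ(\cH)$ is there to guarantee. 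To sidestep the bookkeeping entirely one may instead replace $\sigma_{AB}$ by $(1-\varepsilon)\sigma_{AB}+\varepsilon\,\omega$ for a full-rank $\omega$ on $\cH_{AB}$, apply Theorem~\ref{partial case1}, and let $\varepsilon\to0$ using continuity of the bounded operator functions involved; either route leaves only routine checks beyond the dilation itself.
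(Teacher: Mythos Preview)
Your proposal is correct and follows essentially the same route as the paper: reduce to the partial-trace result Theorem~\ref{partial case1} via a Stinespring dilation, then read off the adjoint formula. The only cosmetic difference is that you use the isometry form $\Lambda(\cdot)=\mathrm{Tr}_{\cH_E}(V\cdot V^*)$ while the paper uses the equivalent unitary-plus-ancilla form $\Lambda(\cdot)=\mathrm{Tr}_{12}(U(\cdot\otimes\tau)U^*)$; your added discussion of the support/invertibility issue (which the paper's proof leaves implicit, since $\rho\otimes\tau$ with $\tau$ pure is likewise rank-deficient) is a welcome bit of extra care rather than a departure.
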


\begin{proof}[Proof of \ref{full case}]Following \cite{sandwichConditions}, for any quantum channel $\Lambda:\cB(\cH)\to\cB(\cK)$, by the Stinespring Dilation Theorem \footnote{Stinespring Dilation Theorem can be found in \cite{wildeBook}.},  there exists a Hilbert space $\cH'$, a pure state $\ket{\tau}\in\cH'\otimes\cK$, and a unitary operator $U:\cH\otimes\cH'\otimes\cK\to \cH\otimes\cH'\otimes\cK$ such that for every $\rho\in\cB(\cH)$, one has
$$\Lambda(\rho)=\mathrm{Tr}_{12}\left(U(\rho\otimes\tau)U^*\right), $$ 
where $\tau=\ket{\tau}\bra{\tau}$ and $\mathrm{Tr}_{12}$ denotes the partial trace over the first two systems $\cH\otimes\cH'$ i.e., $\mathrm{Tr}_{12}:\cH\otimes\cH'\otimes\cK\to\cK$.  Then, for the parameters \footnote{for $\alpha$ and $z$ where the DPI makes sense. see Theorem \ref{alphazDPI} for such parameters} where DPI is satisfied for the $\alpha-z$ RRE, 
\begin{align}
    D_{\alpha,z}(\rho||\sigma) = & D_{\alpha,z}(U(\rho\otimes\tau)U^*||U(\sigma\otimes\tau)U^*)\label{properties} \\
    \geq & D_{\alpha,z}(\mathrm{Tr}_{12}(U(\rho\otimes\tau)U^*)||\mathrm{Tr}_{12}(U(\sigma\otimes\tau)U^*)) \label{partialCase} \\
    = & D_{\alpha,z}(\Lambda(\rho)||\Lambda(\sigma)),
\end{align}
where (\ref{properties}) is due to properties mentioned in \ref{alphaZ} and (\ref{partialCase}) is the DPI for partial traces. By assuming equality and by Theorem \ref{partial case1} one sees that 
\begin{align}
    &\mathbbm{1}_{\cH\otimes\cH'}\otimes\Lambda(\sigma)^{\frac{1-z}{2z}}\left(\Lambda(\sigma)^{\frac{1-\alpha}{2z}}\Lambda(\rho)^{\frac{\alpha}{z}}\Lambda(\sigma)^{\frac{1-\alpha}{2z}}\right)^{z-1}\Lambda(\sigma)^{\frac{1-z}{2z}}\cdot \\
   = & [U(\sigma\otimes\tau)U^*]^{\frac{1-z}{2z}} \\
   &\left([U(\sigma\otimes\tau)U^*]^{\frac{1-\alpha}{2z}}[U(\rho\otimes\tau)U^*]^{\frac{\alpha}{z}}[U(\sigma\otimes\tau)U^*]^{\frac{1-\alpha}{2z}}\right)^{z-1}[U(\sigma\otimes\tau)U^*]^{\frac{1-z}{2z}} \\
   = & U(\sigma^{\frac{1-z}{2z}}\left(\sigma^{\frac{1-\alpha}{2z}}\rho^{\frac{\alpha}{z}}\sigma^{\frac{1-\sigma}{2z}}\right)^{z-1}\sigma^{\frac{1-z}{2z}}\otimes\tau)U^* \label{preAdjoint},
\end{align}
where the last line is due to the fact that $f(UXU^*)=Uf(X)U^*$, for every function $f$ and for any unitary $U$. 
The quantum channel $\Lambda:\cB(\cH)\to\cB(\cK)$, has a unique adjoint, $\Lambda^*:\cB(\cH)\to\cB(\cK)$, and it is given by
\begin{equation}\label{adjoint}
\Lambda^*(X):=(\mathbbm{1}_\cH\otimes\bra{\tau}_{\cH'\otimes\cK})U^*(X)U(\mathbbm{1}_\cH\otimes\ket{\tau}_{\cH'\otimes\cK}).
\end{equation} 
 So applying (\ref{adjoint}) to (\ref{preAdjoint}) gives
 \begin{align}
     \Lambda^*\left[\Lambda(\sigma)^{\frac{1-z}{2z}}\left(\Lambda(\sigma)^{\frac{1-\alpha}{2z}}\Lambda(\rho)^{\frac{\alpha}{z}}\Lambda(\sigma)^{\frac{1-\alpha}{2z}}\right)^{z-1}\Lambda(\sigma)^{\frac{1-z}{2z}}\right] &=\\
     \Lambda^*\left[U\left(\sigma^{\frac{1-z}{2z}}\left(\sigma^{\frac{1-\alpha}{2z}}\rho^{\frac{\alpha}{z}}\sigma^{\frac{1-\alpha}{2z}}\right)^{z-1}\sigma^{\frac{1-z}{2z}}\otimes\tau\right)U^*\right] &= \\
     (\mathbbm{1}_{\cH}\otimes\bra{\tau})\left(\sigma^{\frac{1-z}{2z}}\left(\sigma^{\frac{1-\alpha}{2z}}\rho^{\frac{\alpha}{z}}\sigma^{\frac{1-\alpha}{2z}}\right)^{z-1}\sigma^{\frac{1-z}{2z}}\otimes\tau\right)(\mathbbm{1}_{\cH}\otimes\ket{\tau})&= \\
     \sigma^{\frac{1-z}{2z}}\left(\sigma^{\frac{1-\alpha}{2z}}\rho^{\frac{\alpha}{z}}\sigma^{\frac{1-\alpha}{2z}}\right)^{z-1}\sigma^{\frac{1-z}{2z}}.
 \end{align}
\end{proof}

Here, some algebraic sufficient conditions for saturating the DPI of $\alpha-z$ RRE are explained. 

\begin{proposition}\label{algebraicsufficient}
If $\sigma_{AB}\in\cP(\cH_{AB})$ and $\rho_{AB}\in\cD(\cH_{AB})$, where $\rho_{AB}$ is a product state such that 
$$ \sigma_{AB}^{\frac{1-\alpha}{2z}}\left(\sigma_{AB}^{\frac{1-\alpha}{2z}}\rho_{AB}^{\frac{\alpha}{z}}\sigma_{AB}^{\frac{1-\alpha}{2z}}\right)^{z-1}\sigma_{AB}^{\frac{1-\alpha}{2z}}=\sigma_A^{\frac{1-\alpha}{2z}}\left(\sigma_A^{\frac{1-\alpha}{2z}}\rho_A^{\frac{\alpha}{z}}\sigma_A^{\frac{1-\alpha}{2z}}\right)^{z-1}\sigma_A^{\frac{1-\alpha}{2z}} \otimes\mathbbm{1}_B,$$ 
then $$D_{\alpha,z}(\rho_{AB}||\sigma_{AB})=D_{\alpha,z}(\rho_A||\sigma_A) + k_{\rho},$$
where  $k_\rho= \frac{1}{\alpha-1}\log\left[\mathrm{Tr}\left(\rho_B^{\frac{\alpha}{z}}\right)\right]$. 
\end{proposition}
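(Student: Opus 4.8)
The plan is to reduce the claim to a single multiplicative identity for the trace functional $\Psi_{\alpha,z}$ and then feed in the hypothesis as a direct algebraic substitution. Since $D_{\alpha,z}(\rho||\sigma)=\frac{1}{\alpha-1}\log\Psi_{\alpha,z}(\rho||\sigma)$ and $\frac{1}{\alpha-1}\log$ turns products into sums, it suffices to prove $\Psi_{\alpha,z}(\rho_{AB}||\sigma_{AB})=\Psi_{\alpha,z}(\rho_A||\sigma_A)\cdot\mathrm{Tr}(\rho_B^{\alpha/z})$; applying $\frac{1}{\alpha-1}\log$ to this equality immediately gives $D_{\alpha,z}(\rho_{AB}||\sigma_{AB})=D_{\alpha,z}(\rho_A||\sigma_A)+k_\rho$ with $k_\rho=\frac{1}{\alpha-1}\log[\mathrm{Tr}(\rho_B^{\alpha/z})]$, as stated. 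Before starting I would record the standing assumptions that make every fractional power legitimate: $\alpha\neq1$, $z>0$, $\sigma_{AB}$ invertible and positive, and $\rho_{AB}=\rho_A\otimes\rho_B$ a product state with invertible marginals, so that $T_{AB}:=\sigma_{AB}^{\frac{1-\alpha}{2z}}\rho_{AB}^{\frac{\alpha}{z}}\sigma_{AB}^{\frac{1-\alpha}{2z}}$ and $T_A:=\sigma_A^{\frac{1-\alpha}{2z}}\rho_A^{\frac{\alpha}{z}}\sigma_A^{\frac{1-\alpha}{2z}}$ are positive definite.

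With this notation $\Psi_{\alpha,z}(\rho_{AB}||\sigma_{AB})=\mathrm{Tr}(T_{AB}^{z})$. First I would peel off one copy of $T_{AB}$, writing $\mathrm{Tr}(T_{AB}^{z})=\mathrm{Tr}(\sigma_{AB}^{\frac{1-\alpha}{2z}}\rho_{AB}^{\frac{\alpha}{z}}\sigma_{AB}^{\frac{1-\alpha}{2z}}T_{AB}^{z-1})$, and then use cyclicity of the trace to move the leftmost $\sigma_{AB}^{\frac{1-\alpha}{2z}}$ around to the right, obtaining $\mathrm{Tr}(\rho_{AB}^{\frac{\alpha}{z}}\,[\sigma_{AB}^{\frac{1-\alpha}{2z}}T_{AB}^{z-1}\sigma_{AB}^{\frac{1-\alpha}{2z}}])$. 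The bracketed operator is exactly the left-hand side of the hypothesis, so I substitute the right-hand side $\sigma_A^{\frac{1-\alpha}{2z}}T_A^{z-1}\sigma_A^{\frac{1-\alpha}{2z}}\otimes\mathbbm{1}_B$.

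Next, because $\rho_{AB}=\rho_A\otimes\rho_B$ the functional calculus gives $\rho_{AB}^{\frac{\alpha}{z}}=\rho_A^{\frac{\alpha}{z}}\otimes\rho_B^{\frac{\alpha}{z}}$, and the trace of a product of tensor-product operators factorizes as $\mathrm{Tr}[(X_A\otimes Y_B)(P_A\otimes\mathbbm{1}_B)]=\mathrm{Tr}(X_AP_A)\,\mathrm{Tr}(Y_B)$. Applying this yields $\mathrm{Tr}(\rho_A^{\frac{\alpha}{z}}\sigma_A^{\frac{1-\alpha}{2z}}T_A^{z-1}\sigma_A^{\frac{1-\alpha}{2z}})\cdot\mathrm{Tr}(\rho_B^{\frac{\alpha}{z}})$. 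A final use of cyclicity collapses the $A$-factor back to $\mathrm{Tr}(\sigma_A^{\frac{1-\alpha}{2z}}\rho_A^{\frac{\alpha}{z}}\sigma_A^{\frac{1-\alpha}{2z}}T_A^{z-1})=\mathrm{Tr}(T_A\cdot T_A^{z-1})=\mathrm{Tr}(T_A^{z})=\Psi_{\alpha,z}(\rho_A||\sigma_A)$, which proves the multiplicative identity and hence the proposition.

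This argument uses no complex analysis and no convexity — it is purely formal trace manipulation — so there is no serious obstacle; the work is entirely bookkeeping. The points that need care are: confirming that the exponents line up so that the factor exposed after the first cyclic move is \emph{literally} the left-hand side of the hypothesis (this is precisely why the hypothesis is stated with the outer power $\frac{1-\alpha}{2z}$, which agrees with the necessary condition of Corollary \ref{full case} only when $z=\alpha$); correctly tracking the passage from $T_A^{z-1}$ back to $T_A^{z}$ when one copy of $T_A$ is reabsorbed; and applying the tensor factorization of the trace to the correct grouping, i.e.\ verifying that the right-hand side of the hypothesis really has the form $P_A\otimes\mathbbm{1}_B$. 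Each of these is routine once the abbreviations $T_{AB}$ and $T_A$ are fixed.
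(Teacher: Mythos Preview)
Your argument is correct and is essentially the same as the paper's: both multiply the hypothesis by $\rho_{AB}^{\alpha/z}$, take a trace, use cyclicity to recognize $\Psi_{\alpha,z}(\rho_{AB}||\sigma_{AB})$ on the left, and use the tensor factorization $\rho_{AB}^{\alpha/z}=\rho_A^{\alpha/z}\otimes\rho_B^{\alpha/z}$ together with $\mathrm{Tr}[(X_A\otimes Y_B)(P_A\otimes\mathbbm{1}_B)]=\mathrm{Tr}(X_AP_A)\mathrm{Tr}(Y_B)$ on the right, then apply $\frac{1}{\alpha-1}\log$. Your presentation via the abbreviations $T_{AB}$, $T_A$ and the explicit ``peel off one copy'' step is slightly cleaner, but the underlying computation is identical.
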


\begin{proof}
Since $\rho_{AB}$ is a product state on $\cH_{AB}$, it follows that $\rho_{AB}=\rho_A\otimes\rho_B$, where $\rho_A\in\mathcal{D}(\mathcal{H}_A)$ and $\rho_B\in\mathcal{D}(\mathcal{H}_B)$. Multiplying the assumed expression by $\rho_{AB}$ on the left and taking a trace gives \footnote{Note that we are using the fact that $\mathrm{Tr}_{AB}(X)=\mathrm{Tr}_A(\mathrm{Tr}_BX)$}

$$ \mathrm{Tr}\left(\rho_{AB}^\frac{\alpha}{z}\sigma_{AB}^{\frac{1-\alpha}{2z}}\left(\sigma_{AB}^{\frac{1-\alpha}{2z}}\rho_{AB}^{\frac{\alpha}{z}}\sigma_{AB}^{\frac{1-\alpha}{2z}}\right)^{z-1}\sigma_{AB}^{\frac{1-\alpha}{2z}}\right) =
\mathrm{Tr}\left(\rho_{AB}^\frac{\alpha}{z}\left(\sigma_A^{\frac{1-\alpha}{2z}}\left(\sigma_A^{\frac{1-\alpha}{2z}}\rho_A^{\frac{\alpha}{z}}\sigma_A^{\frac{1-\alpha}{2z}}\right)^{z-1}\sigma_A^{\frac{1-\alpha}{2z}} \otimes\mathbbm{1}_B \right)\right).$$ This implies 
$$ \mathrm{Tr}\left[\left(\sigma_{AB}^{\frac{1-\alpha}{2z}}\rho_{AB}^{\frac{\alpha}{z}}\sigma_{AB}^{\frac{1-\alpha}{2z}}\right)^z\right] = \mathrm{Tr} \left((\rho_A^{\frac{\alpha}{z}}\otimes\rho_B^{\frac{\alpha}{z}})\sigma_A^{\frac{1-\alpha}{2z}}\left(\sigma_A^{\frac{1-\alpha}{2z}}\rho_A^{\frac{\alpha}{z}}\sigma_A^{\frac{1-\alpha}{2z}}\right)^{z-1}\sigma_A^{\frac{1-\alpha}{2z}}\right),$$
 which gives
 $$ \mathrm{Tr}\left[\left(\sigma_{AB}^{\frac{1-\alpha}{2z}}\rho_{AB}^{\frac{\alpha}{z}}\sigma_{AB}^{\frac{1-\alpha}{2z}}\right)^z\right] = \mathrm{Tr}\left[ \left(\rho_A^{\frac{\alpha}{z}}\sigma_A^{\frac{1-\alpha}{2z}}\left(\sigma_A^{\frac{1-\alpha}{2z}}\rho_A^{\frac{\alpha}{z}}\sigma_A^{\frac{1-\alpha}{2z}}\right)^{z-1}\sigma_A^{\frac{1-\alpha}{2z}}\right)\otimes \rho_B^{\frac{\alpha}{z}}\right].$$ Taking the $\log$ of both sides and multiplying by $\frac{1}{\alpha-1}$ gives 
 $$ \frac{1}{\alpha-1}\log\left(\mathrm{Tr}\left[\left(\sigma_{AB}^{\frac{1-\alpha}{2z}}\rho_{AB}^{\frac{\alpha}{z}}\sigma_{AB}^{\frac{1-\alpha}{2z}}\right)^z\right]\right) = \frac{1}{\alpha-1}\log\left(\mathrm{Tr}\left[ \left(\sigma_A^{\frac{1-\alpha}{2z}}\rho_A^{\frac{\alpha}{z}}\sigma_A^{\frac{1-\alpha}{2z}}\right)^{z}\right]\mathrm{Tr}\left( \rho_B^{\frac{\alpha}{z}}\right)\right),$$ which is the same as 
 $$D_{\alpha,z}(\rho_{AB}||\sigma_{AB}) = D_{\alpha,z}(\rho_A||\sigma_A) + k_\rho, 
 $$ 
 where  $k_\rho= \frac{1}{\alpha-1}\log\left[\mathrm{Tr}\left(\rho_B^{\frac{\alpha}{z}}\right)\right]$ as desired.
  
\end{proof}

\begin{remark}
It is interesting to see that Theorem \ref{partial case1} and Proposition \ref{algebraicsufficient} would hold simultaneously if and only if $z=\alpha$. This in turn will result in $\alpha-$SRD, which aligns with the work done in \cite{sandwichConditions}.
\end{remark}

\begin{remark}
If $\rho_B$ in Proposition \ref{algebraicsufficient} is a pure state, then $k=0$. This immediately leads to another result:
\end{remark}

\begin{proposition}
If $\sigma_{AB}\in\cP(\cH_{AB})$ and $\rho_{AB}\in\cD(\cH_{AB})$, where $\rho_{AB}$ is a separable state, such that 
$$ \sigma_{AB}^{\frac{1-\alpha}{2z}}\left(\sigma_{AB}^{\frac{1-\alpha}{2z}}\rho_{AB}^{\frac{\alpha}{z}}\sigma_{AB}^{\frac{1-\alpha}{2z}}\right)^{z-1}\sigma_{AB}^{\frac{1-\alpha}{2z}}=\sigma_A^{\frac{1-\alpha}{2z}}\left(\sigma_A^{\frac{1-\alpha}{2z}}\rho_A^{\frac{\alpha}{z}}\sigma_A^{\frac{1-\alpha}{2z}}\right)^{z-1}\sigma_A^{\frac{1-\alpha}{2z}},$$ 
then the DPI under partial traces is saturated. i.e., $$D_{\alpha,z}(\rho_{AB}||\sigma_{AB})=D_{\alpha,z}(\rho_A||\sigma_A).$$
\end{proposition}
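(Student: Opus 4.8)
The plan is to establish the two inequalities $\Psi_{\alpha,z}(\rho_{AB}||\sigma_{AB})\ge\Psi_{\alpha,z}(\rho_A||\sigma_A)$ and $\Psi_{\alpha,z}(\rho_{AB}||\sigma_{AB})\le\Psi_{\alpha,z}(\rho_A||\sigma_A)$ separately; since $\alpha>1$, the map $t\mapsto\frac{1}{\alpha-1}\log t$ is strictly increasing, so the two together yield $D_{\alpha,z}(\rho_{AB}||\sigma_{AB})=D_{\alpha,z}(\rho_A||\sigma_A)$. The first inequality is automatic: $\mathrm{Tr}_B$ is a quantum channel with $\rho_A=\mathrm{Tr}_B\rho_{AB}$ and $\sigma_A=\mathrm{Tr}_B\sigma_{AB}$, and the parameters lie in the range of Theorem \ref{partial case1} (case (2) of Theorem \ref{alphazDPI}), so the DPI gives $D_{\alpha,z}(\rho_{AB}||\sigma_{AB})\ge D_{\alpha,z}(\rho_A||\sigma_A)$, equivalently $\Psi_{\alpha,z}(\rho_{AB}||\sigma_{AB})\ge\Psi_{\alpha,z}(\rho_A||\sigma_A)$.

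For the reverse inequality I would first run the reduction from the proof of Proposition \ref{algebraicsufficient}: multiply the hypothesized identity (whose right-hand side is $Y_A\otimes\mathbbm{1}_B$, where $Y_A:=\sigma_A^{\frac{1-\alpha}{2z}}(\sigma_A^{\frac{1-\alpha}{2z}}\rho_A^{\frac{\alpha}{z}}\sigma_A^{\frac{1-\alpha}{2z}})^{z-1}\sigma_A^{\frac{1-\alpha}{2z}}\ge0$) on the left by $\rho_{AB}^{\frac{\alpha}{z}}$ and take the trace. Cyclicity collapses the left-hand side to $\Psi_{\alpha,z}(\rho_{AB}||\sigma_{AB})$ exactly as in that proof---the outer exponent $\frac{1-\alpha}{2z}$ coinciding with the inner one being precisely what makes the collapse work---while the right-hand side becomes $\mathrm{Tr}((Y_A\otimes\mathbbm{1}_B)\rho_{AB}^{\frac{\alpha}{z}})=\mathrm{Tr}_A(\mathrm{Tr}_B(\rho_{AB}^{\frac{\alpha}{z}})\,Y_A)$. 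The same computation on system $A$ gives $\Psi_{\alpha,z}(\rho_A||\sigma_A)=\mathrm{Tr}(Y_A\rho_A^{\frac{\alpha}{z}})$, so the reverse inequality is equivalent to
\[
\mathrm{Tr}\left((Y_A\otimes\mathbbm{1}_B)\rho_{AB}^{\frac{\alpha}{z}}\right)\le\mathrm{Tr}\left(Y_A\rho_A^{\frac{\alpha}{z}}\right),\qquad Y_A\ge0,\ \rho_{AB}\ \text{separable}.
\]

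Proving this last inequality is the main obstacle, and it is where separability is used essentially. For entangled $\rho_{AB}$ the natural operator bound $\mathrm{Tr}_B(\rho_{AB}^{\alpha/z})\le\rho_A^{\alpha/z}$ fails outright: for $\rho_{AB}$ maximally entangled one has $\rho_{AB}^{\alpha/z}=\rho_{AB}$, $\mathrm{Tr}_B\rho_{AB}=\rho_A=\mathbbm{1}/d$, and $\mathbbm{1}/d\not\le(\mathbbm{1}/d)^{\alpha/z}$ once $\alpha/z>1$. I would attack the separable case by writing $\rho_{AB}=\sum_j p_j\,\rho_A^{(j)}\otimes\rho_B^{(j)}$ and invoking operator convexity of $t\mapsto t^{\alpha/z}$ (valid since $\frac{\alpha}{z}\in[1,2]$ in this regime); because $Y_A\otimes\mathbbm{1}_B\ge0$, the functional $\rho\mapsto\mathrm{Tr}((Y_A\otimes\mathbbm{1}_B)\rho^{\alpha/z})$ is convex, which bounds the left-hand side by $\sum_j p_j\,\mathrm{Tr}(Y_A(\rho_A^{(j)})^{\alpha/z})\,\mathrm{Tr}((\rho_B^{(j)})^{\alpha/z})\le\sum_j p_j\,\mathrm{Tr}(Y_A(\rho_A^{(j)})^{\alpha/z})$, using $\mathrm{Tr}((\rho_B^{(j)})^{\alpha/z})\le1$. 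The sticking point is that the remaining step $\sum_j p_j\,\mathrm{Tr}(Y_A(\rho_A^{(j)})^{\alpha/z})\le\mathrm{Tr}(Y_A\rho_A^{\alpha/z})$ is the reverse of convexity and is false for a generic positive weight $Y_A$ (taking the $\rho_A^{(j)}$ rank-one and $Y_A$ the spectral projection of $\rho_A$ onto its smallest eigenvalue already violates it). Hence the hypothesis must be re-injected: the identity ties $Y_A$ to $\sigma_{AB}$ and $\sigma_A$ in a way that should prevent $Y_A$ from weighting those bad directions, and quantifying this---showing that the particular $Y_A$ forced by the hypothesis satisfies the displayed inequality for separable $\rho_{AB}$---is the real content. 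Once it is in place, combining with the DPI inequality of the first paragraph sandwiches $\Psi_{\alpha,z}(\rho_{AB}||\sigma_{AB})=\Psi_{\alpha,z}(\rho_A||\sigma_A)$, giving the claimed saturation $D_{\alpha,z}(\rho_{AB}||\sigma_{AB})=D_{\alpha,z}(\rho_A||\sigma_A)$.
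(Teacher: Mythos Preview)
Your reduction is exactly the one the paper makes: multiply the hypothesised operator identity on the left by $\rho_{AB}^{\alpha/z}$, take the trace, and observe that the left-hand side collapses by cyclicity to $\Psi_{\alpha,z}(\rho_{AB}\Vert\sigma_{AB})$ while the right-hand side becomes $\mathrm{Tr}\bigl((Y_A\otimes\mathbbm{1}_B)\,\rho_{AB}^{\alpha/z}\bigr)$. Where you and the paper diverge is in what happens next, and you are right to flag this as the crux.

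The paper does not sandwich with the DPI as you do; it claims the equality $\mathrm{Tr}\bigl((Y_A\otimes\mathbbm{1}_B)\,\rho_{AB}^{\alpha/z}\bigr)=\mathrm{Tr}\bigl(Y_A\,\rho_A^{\alpha/z}\bigr)$ outright, by writing $\rho_{AB}=\sum_i\lambda_i\,\ket{\psi_i}\!\bra{\psi_i}\otimes\ket{\phi_i}\!\bra{\phi_i}$ and then asserting
\[
\rho_{AB}^{\alpha/z}=\sum_i\lambda_i^{\alpha/z}\,\ket{\psi_i}\!\bra{\psi_i}\otimes\ket{\phi_i}\!\bra{\phi_i}
\quad\text{and}\quad
\sum_i\lambda_i^{\alpha/z}\,\ket{\psi_i}\!\bra{\psi_i}=\rho_A^{\alpha/z}.
\]
Both displayed identities treat the separable decomposition as if it were a spectral decomposition: the pure product states appearing in a separable representation are not in general mutually orthogonal, so neither equality holds. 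In other words, the paper's argument has precisely the gap you isolated---it is just hidden behind an unjustified computation rather than stated as an obstacle.

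Your own proposal is candid that the residual inequality
\[
\sum_jp_j\,\mathrm{Tr}\bigl(Y_A\,(\rho_A^{(j)})^{\alpha/z}\bigr)\;\le\;\mathrm{Tr}\bigl(Y_A\,\rho_A^{\alpha/z}\bigr)
\]
is false for arbitrary $Y_A\ge0$ and that you have not yet exploited whatever structural constraint the hypothesis imposes on $Y_A$. As it stands, neither your argument nor the paper's closes this gap; the step relating $\mathrm{Tr}_B(\rho_{AB}^{\alpha/z})$ to $\rho_A^{\alpha/z}$ for separable $\rho_{AB}$ remains the missing ingredient in both.
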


\begin{proof}
Since $\rho_{AB}$ is a separable state on $\cH_{AB}$, write $\rho_{AB}$ as a convex combination of a tensor product of pure states. i.e., $\rho_{AB}=\sum\limits_{i\in I}\lambda_i\ket{\psi_i}_A\bra{\psi_i}\otimes\ket{\phi_i}_B\bra{\phi_i}$, where $\{\ket{\psi_i}\}$ and $\{\ket{\phi_i}\}$ are sets of pure states on $\cH_A$ and $\cH_B$ respectively. Furthermore, $0\leq\lambda_i\leq1$, for all $i\in I$ so that $\sum\limits_{i\in I}\lambda_i=1$. Again, multiplying by $\rho_{AB}$ on the left and taking a trace of the assumed expression yields 
$$  \mathrm{Tr}\left[\left(\sigma_{AB}^{\frac{1-\alpha}{2z}}\rho_{AB}^{\frac{\alpha}{z}}\sigma_{AB}^{\frac{1-\alpha}{2z}}\right)^z\right] = \mathrm{Tr} \left(\left(\sum_{i\in I}\lambda_i\ket{\psi_i}_A\bra{\psi_i}\otimes\ket{\phi_i}_B\bra{\phi_i}\right)^{\frac{\alpha}{z}}\sigma_A^{\frac{1-\alpha}{2z}}\left(\sigma_A^{\frac{1-\alpha}{2z}}\rho_A^{\frac{\alpha}{z}}\sigma_A^{\frac{1-\alpha}{2z}}\right)^{z-1}\sigma_A^{\frac{1-\alpha}{2z}}\right).$$
This implies 
$$ \mathrm{Tr}\left[\left(\sigma_{AB}^{\frac{1-\alpha}{2z}}\rho_{AB}^{\frac{\alpha}{z}}\sigma_{AB}^{\frac{1-\alpha}{2z}}\right)^z\right] = $$
$$\mathrm{Tr}\left[ \left(\left(\sum_{i\in I}\lambda_i^{\frac{\alpha}{z}}\ket{\psi_i}_A\bra{\psi_i}\right)\sigma_A^{\frac{1-\alpha}{2z}}\left(\sigma_A^{\frac{1-\alpha}{2z}}\rho_A^{\frac{\alpha}{z}}\sigma_A^{\frac{1-\alpha}{2z}}\right)^{z-1}\sigma_A^{\frac{1-\alpha}{2z}}\right)\otimes\ket{\phi_i}_B\bra{\phi_i}\right], $$
which gives
$$\mathrm{Tr}\left[\left(\sigma_{AB}^{\frac{1-\alpha}{2z}}\rho_{AB}^{\frac{\alpha}{z}}\sigma_{AB}^{\frac{1-\alpha}{2z}}\right)^z\right] = \mathrm{Tr}\left[ \left(\rho_A^{\frac{\alpha}{z}}\sigma_A^{\frac{1-\alpha}{2z}}\left(\sigma_A^{\frac{1-\alpha}{2z}}\rho_A^{\frac{\alpha}{z}}\sigma_A^{\frac{1-\alpha}{2z}}\right)^{z-1}\sigma_A^{\frac{1-\alpha}{2z}}\right)\right].$$
Taking the $\log$ of both sides and multiplying by $\frac{1}{\alpha-1}$ gives 
$$ D_{\alpha,z}(\rho_{AB}||\sigma_{AB}) = D_{\alpha,z}(\rho_A||\sigma_A). $$
\end{proof}

\section{Closing Remarks}

We have shown algebraic conditions equivalent to saturating the data processing inequality for $1<z\leq\alpha\leq2z$, which generalizes the $\alpha-$ SRD saturation condition from \cite{sandwichConditions}. The techniques in this paper fail for $\alpha<1$ because the H\"{o}lder inequality requires positive powers, so it would be interesting to find a similar result for this case. As mentioned in section \ref{sec: notations}, a quantum channel $\Lambda$ is said to be \emph{sufficient} with respect to $\rho$ and $\sigma$ if there exists a quantum channel $\cR$ such that $(\cR\circ\Lambda)(\rho)=\rho$ and $(\cR\circ\Lambda)(\sigma)=\sigma$. For Umegaki relative entropy, $\alpha$-RRE, and $\alpha$-SRD, saturation of the DPI is equivalent to sufficiency of the quantum channel $\Lambda$. In general, it is not known whether sufficiency of the quantum channel is equivalent to saturation of the $\alpha-z$ RRE DPI. However in \cite{hiaimosonyi}, Hiai and Mosonyi do prove such results for a set of density operators fixed under the quantum channel. (i.e., $\Lambda(\rho)=\rho$ and $\Lambda(\sigma)=\sigma)$. It would be interesting to find a larger class of channels where sufficiency holds. 
\subsection{Acknowledgments}
The author is appreciative to Haonan Zhang for carefully reading this manuscript and asking questions that lead to the modification of Proposition \ref{newPick}. The author is also grateful to Anna Vershynina, who is her advisor, for her comments and suggestions throughout the entirety of this paper. This work is supported by the National Science Foundation (NSF) grant DMS-1812734. 

\bibliographystyle{plain}
\bibliography{bib}

\end{document}